\newcommand{\F}{{\cal F}}
\newcommand{\T}{{\cal T}}
\newcommand{\blue}{\textcolor{black}}
\newcommand{\violet}{\textcolor{black}}
\newcommand{\reals}{\mathbb{R}}
\newtheorem{theorem}{Theorem}
\newtheorem{proposition}{Proposition}
\newtheorem{assumption}{Assumption}
\newtheorem{lemma}{Lemma}
\theoremstyle{definition}
\newtheorem{example}{Example}
\newtheorem{remark}{Remark}
\title{Optimal stopping and divestment timing under scenario ambiguity and learning\footnote{Peter Tankov gratefully acknowledges financial support of the ANR under the program France 2030 (grant ANR-23-EXMA-0011) and from
the FIME Research Initiative of the Europlace Institute of Finance. We thank Nicole B\"auerle and Denis Belomestny  for insightful comments on an earlier draft of this paper. }}
\author{Andrea Mazzon \\ University of Verona \\ \texttt{andrea.mazzon@univr.it}\and Peter Tankov \\ CREST, ENSAE, Institut Polytechnique de Paris \\ \texttt{peter.tankov@ensae.fr}}
\date{}
\begin{document}
\maketitle

\begin{abstract}
    Aiming to analyze the impact of environmental transition on the value of assets and on asset stranding, we study optimal stopping and  divestment timing decisions for an economic agent whose future revenues depend on the realization of a scenario from a given set of possible futures. Since the future scenario is unknown and the probabilities of individual prospective scenarios are ambiguous, we adopt the smooth model of decision making under ambiguity aversion of Klibanoff et al (2005), framing the optimal divestment decision as an optimal stopping problem with learning under ambiguity aversion. We then prove a minimax result reducing this problem to a series of standard optimal stopping problems with learning. The theory is illustrated with two examples: the problem of optimally selling a stock with ambigous drift, and the problem of optimal divestment from a coal-fired power plant under transition scenario ambiguity. 
\end{abstract}

\noindent Key words: Model ambiguity, optimal stopping, Bayesian learning, divestment, energy transition.

\medskip

\noindent JEL Classification: D81, G11

\section{Introduction}
Transitioning towards a low-carbon economy is paramount for mitigating the adverse impacts of climate change. The low-carbon transition creates both risks and opportunities for economic agents and financial institutions. Determining the value of assets affected by these risks and opportunities, and making the optimal investment decisions relative to these assets is therefore of key importance. 
The modern approach to this problem involves quantifying possible  future evolutions of the economy and the environment by means of transition scenarios \citep{krey2014global}. 


Produced with integrated assessment models, transition scenarios are disseminated by globally recognized entities, including the IEA (International Energy Agency), IPCC (Intergovernmental Panel on Climate Change), NGFS (Network for Greening the Financial System), and IIASA (International Institute for Applied System Analysis). Companies and financial institutions use these scenarios for evaluating the impact of climate change and energy transition on their portfolios and assets \citep{acharya2023climate}. These stress testing exercises are often carried out by taking a known scenario, and assuming that the bank or company balance sheet remains static throughout the time span of the scenario. However, the time horizon of transition scenarios is usually very long, making the static balance sheet assumption questionable: financial institutions will readjust their portfolios, and companies will make investment or divestment decisions to adapt to the future evolution of the economy and climate. When making these decisions, the economic agents do not know the true scenario, and even the probabilities of realisation of various possible scenarios are ambiguous. Therefore, management of long-term climate related risks and evaluation of transition-sensitive assets must take into account the uncertainty of future transition scenarios.


The main goal of this paper is to understand the impact of scenario uncertainty and \blue{model} ambiguity on investment and divestment decisions of economic agents facing transition-related risks. We distinguish the setting of scenario uncertainty, when several future scenarios are possible, and their probabilities are known to the agent, from that of \blue{model ambiguity on the set of scenarios}, when several future scenarios are possible, but the probabilities of their realization are ambiguous. 

Assume that \violet{the variable which characterizes the realized scenario takes values in a compact metric set $\mathcal{S}$, such that scenario $\theta\in \mathcal S$ corresponds to a probability measure $\mathbb P^\theta$ {on the space of trajectories of the processes observed by the agent}.} A scenario therefore corresponds to a probabilistic model, and \blue{model} 
ambiguity in our setting corresponds to ambiguity over a set of alternative models which describe future evolution of a set of economic variables.

 An economic agent aims to optimally choose a stopping time $\tau$, which may correspond, for example, to the sale or the closure of a polluting asset, or to the investment into a green project. Let $Y_\tau$ be the payoff of the agent at the stopping time. If the probability distribution over the set of future scenarios is known and denoted by $\mathcal P$,
the problem of the agent writes as follows:
$$
\sup_{\tau \in \T}\int_{\mathcal S}d \mathcal P(\theta) \mathbb E^\theta \left[Y_{\tau}\right].
$$
Here, $\mathcal T$ is the set of admissible stopping times. Since the true scenario is not known by the agent, the stopping time may not depend on the scenario. However, the stopping decision may depend on a signal (e.g., observed carbon emissions), which may progressively reveal incomplete scenario information through a learning process. 

In this paper, we are interested in the \blue{case} where the prior scenario \violet{probability distribution is} also unknown. 
We work in the framework of smooth model of decision making of \cite{klibanoff2005smooth}, which in the present setting boils down to solving the following problem: 
\violet{\begin{equation}\label{eq:introduction}
    \sup_{\tau \in \T} v^{-1}\left( \int_{\mathcal S} d\mathcal P(\theta)v\left(\mathbb E^{\theta}\left[Y_{\tau}\right]\right)\right).
    \end{equation}}%
Here, \violet{$\mathcal P$} still plays the role of a reference probability measure on the set of scenarios, but different scenario payoffs are distorted by the agents in their optimization problem through the function $v: \reals \to \reals$, a concave function representing the aversion of investors towards the ambiguity. 
As a result of this distortion, the agents affect more importance to pessimistic scenarios (with low payoffs) even if their reference probability is small. 

Our contribution is two-fold: on the more theoretical side, we study for the first time optimal stopping problems in the setting of \cite{klibanoff2005smooth}, by building on the general framework of \cite{drapeau2013risk}, \blue{which represents preference functionals as penalized expectations over alternative probability measures. In our case, this allows us to rewrite the smooth ambiguity criterion \eqref{eq:introduction} as the sup--inf problem \eqref{eq:generaldualdrapeau}--\eqref{eq:Gfordrapeau}. 
For each stopping time, the agent selects the most adverse scenario distribution over $\mathcal{S}$, while accounting for a flexible form of penalization that increases with the divergence from the reference prior.}
\blue{Our main results establish conditions that guarantee both the interchangeability of the supremum and infimum in this problem, and the existence of a solution.} Being able to swap the order of the supremum and infimum is crucial  because it allows to address the original, non-standard optimal stopping problem by solving a classical, inner optimal stopping problem, using well-known techniques.
\blue{These results are connected to \cite{belomestny2016optimal}, where the authors study optimal stopping problems in terms of conditional convex risk measures and rely on a dual representation involving randomized stopping times in order to prove their minimax result. However, while for the case of discrete time and a finite set of scenarios we rely on the representation of randomized stopping times and the associated topological structure introduced in \cite{belomestny2016optimal}, our analysis builds on the specific features of model ambiguity in our setting. In particular, the structure of scenario uncertainty, combined with suitable continuity assumptions on both the scenario space and the process $Y$, allows us to extend our results to continuous time and uncountable sets of scenarios via a limit argument tailored to the penalization structure induced by smooth ambiguity preferences. Furthermore, unlike \cite{belomestny2016optimal}, where the penalty is incorporated additively with respect to the expected payoff, our formulation accommodates a broader class of penalizations.
}

On the more applied side, we present two examples showing how the present framework can be used to numerically solve two optimal divestment problems with \blue{model ambiguity on the set of scenarios} and learning. 

In the first example, the goal is to find the optimal time to sell a stock, whose expected return value is ambiguous, when the seller acquires scenario information by observing the price of the stock. In the second example, inspired by energy transition, the goal is choose the optimal time to decomission a carbon intensive power plant, and the agent acquires information progressively by observing a signal such as carbon emissions. 
 In both examples, our theoretical results are of great importance  because they permit to significantly simplify the algorithm and exploit dynamic programming. 

\paragraph{Review of literature}
Our paper is part of the extensive literature on optimal stopping problems under model ambiguity. Among the first contributions on the topic, \cite{riedel2009optimal} and \cite{cheng2013optimal} consider the problem
\begin{equation}\label{eq:problemriedel}
\text{maximize $\inf_{\mathbb P \in \mathcal M} \mathbb{E}^{\mathbb P}[X_{\tau}]$ over all stopping times $\tau \le T$}
\end{equation}
for a finite time horizon $T>0$, where $X$ is a stochastic process in  discrete and continuous time, respectively, and $\mathcal M$ is a set of probability measures accounting for model ambiguity.  In \cite{chen2002ambiguity} the problem \eqref{eq:problemriedel} is solved via the use of backward stochastic differential equations (BSDEs).

In this setting, the set $\mathcal M$ has two important properties. First,  all measures in $\mathcal M$ are equivalent to a reference probability measure: a typical example is given by drift uncertainty for the process $X$. Moreover, $\mathcal M$ is constructed in such a way that the optimization problem is \emph{time consistent}, which means that optimal decisions are not changed in time due to the arrival of new pieces of information. In multiple-prior models of type \eqref{eq:problemriedel}, time consistency reduces to the property of stability of the set of priors under pasting: if $\mathbb P_1$ and $\mathbb P_2$ are two elements of $\mathcal M$, then the probability obtained by pasting the density of $\mathbb P_1$ before some stopping time $\tau$ and the density of $\mathbb P_2$ after this stopping time, must also belong to $\mathcal M$. In particular, this implies that the set $\mathcal M$ cannot be finite. 
Time consistency restricts the class of admissible models, but allows on the other hand to apply the principle of dynamic programming. An optimal stopping problem under a sublinear expectation operator accounting for model uncertainty and also satisfying time consistency is studied in \cite{bayraktar2011optimalone} and solved in \cite{bayraktar2011optimaltwo} using BSDEs. Moreover, a continuous-time optimal stopping problem involving dynamic convex risk measures satisfying time consistency is studied in \cite{bayraktar2010optimal}.

A relevant work for our paper is  \cite{belomestny2016optimal}, where optimal stopping problems are studied for conditional convex risk measures involving a worst case expectation with a penalty term. Note that the time consistency property of the risk measures needed in \cite{bayraktar2010optimal} is here dropped.
 Similar tools are also used in \cite{belomestny2017optimal} to show a primal representation result for optimal stopping problems under probability distortions, which have been initially proposed and studied in \cite{xu2013optimal}. 
For further contributions on optimal stopping under non-dominated sets of probability measures we refer among others to  \cite{el1997reflected}, \cite{karatzas2005game},  \cite{karatzas2008martingale}, \cite{follmer2011stochastic}.
Finally, \cite{nutz2015optimal} and \cite{ekren2014optimal} consider the case when the set of probability measures $\mathcal M$ is non-dominated and employ the Snell envelope characterization to study an optimal stopping problem for the supremum of expectations over $\mathcal M$, supposed to be weakly compact, and the existence of optimal stopping times in a zero-sum game between a stopper and a controller choosing a probability measure in $\mathcal M$, respectively. 
In \cite{nutz2015optimal}, the set $\mathcal M$ is supposed to satisfy a \emph{stability under pasting} assumption which is analogous to time consistency in the case of dominated measures, see point (iii) of their Assumption 2.1. A similar condition is called \emph{concatenation property} in \cite{ekren2014optimal}, see their Property (P3).

All contributions listed above take ambiguity aversion into account by optimizing the payoff expectation under the worst case probability measure over a set $\mathcal M$. While the robust worst-case approach may be well suited for the management of catastrophic climate riks \citep{kunreuther2013risk,millner2013welfare,weitzman2009modeling}, it is less easy to justify in the context of cost-benefit evaluation of investments, since agents whose decisions are grounded in the worst-case approach may lose attractive investment opportunities contingent to more favorable scenarios.  For this reason, and unlike the works quoted above, we consider instead the smooth ambiguity setting of \cite{klibanoff2005smooth}, which allows in general for a less conservative approach to ambiguity aversion. Smooth ambiguity adjustments have been shown to be equivalent to robust prior adjustments with a logarithmic specification of risk aversion in \cite{hansen2018aversion}, where the authors construct recursive representations of intertemporal preferences that allow for both penalized and smooth ambiguity aversion to subjective uncertainty. Uncertainty aversion with penalization proportional to discounted relative entropy with respect to ``structured'' models is studied in \cite{hansen2022structured} in a setting where a decision-maker has ambiguity about a
prior over the set of structured statistical models and fears that each of those models is misspecified. \cite{bauerle2023optimal} introduce smooth ambiguity into a portfolio optimization problem. 


In the context of long-term climate-related or transition-related risks, or cost-benefit analysis of climate policies, it is natural to assume that information about scenarios or key parameters of the climate system, such as the climate sensitivity, is not known at the start but discovered progressively by the agents through a learning process \citep{ekholm2018climatic}. This imposes a specific structure on the probability measure $\mathbb P^\theta$ associated to each scenario $\theta$: indeed, under a Bayesian learning framework, the density of each probability measure must evolve according to the Bayesian update rule, see e.g., \cite{flora2023green}. On the other hand, the time consistency property imposes a different structure on the set of probabilities $\mathcal M$ (stability by pasting). Combining the two frameworks (Bayesian learning and time consistency) is not straightforward and leads to very strong constraints on the set $\mathcal M$ (see, e.g., \cite{epstein2022optimal}). Since in this paper we allow progressive learning of scenario information, and {in the examples} we work with a finite set of scenarios, we drop the time consistency assumption. Time-inconsistent optimal stopping problems have been studied in the literature and a variety of approaches have been explored 
\citep{bayraktar2019time,pedersen2016optimal,christensen2018finding,christensen2020time}. 
We determine the optimal stopping strategy at time zero, and assume that the agent precommits to this strategy until the end. \blue{This reflects the idea that the agent evaluates possible scenarios based on a prior belief and penalizes alternative beliefs according to their plausibility, both assessed at the initial time. The stopping rule thus reflects a time-zero valuation that incorporates ambiguity attitudes, which break time consistency.} The alternative, game-theoretical approach, is difficult to implement in continuous time from both the theoretical and the computational point of view, \blue{as it often lacks uniqueness and in general requires strong  assumptions on the model and preferences, \citep{bayraktar2019equilibrium,christensen2020time}.}

From a more applied viewpoint, our paper makes a contribution to the theory of optimal investment under uncertainty. The literature on optimal investment and divestment problems and real options in the context of climate finance and energy project valuation is vast: \cite{detemple2020value} consider the case of a firm that aims to build a new power plant and has to decide between wind and gas technologies, \cite{boomsma2012renewable} adopt a real options approach to analyze investment timing for renewable energy projects under different policy interventions, \cite{abadie2011optimal} assess the problem of finding the optimal time to dismiss a coal station and obtain its salvage value, \cite{laurikka2006emissions} study the  impacts of the European Union Emission allowance Trading Scheme on decisions relative to  investments in a coal power plant, \cite{flora2020price} take the point of view of a  greenhouse gas emitter who can switch from its current high-carbon technology to a cleaner one, and \cite{hach2016capacity} apply a  real option approach to evaluate investment decisions and timing of a single investor in gas-fired power generation. The problem of choosing when to invest in a renewable energy project under uncertainty about future feed-in tariff (FIT)  is studied in \cite{hagspiel2021green}, where it is shown that the range of FITs for which it is optimal
to invest immediately decreases the longer a subsidy has been in place. \cite{dalby2018green} incorporate Bayesian learning into the setting and show that agents are less likely to invest when the likelihood of a policy change increases.

More specifically, uncertainty related to climate transition scenarios is taken into consideration, e.g., in \cite{dumitrescu2024energy}, where the authors develop a theory of optimal stopping mean field games with common noise and partial information to study the impact of  scenario uncertainty on the rate at which conventional generation is replaced by renewable plants in the presence of many interacting agents. In \cite{basei2024uncertainty}, a continuous-time real-options model of green technology adoption subject to scenario uncertainty is developed. In \cite{flora2023green}, an optimal investment/divestment problem is studied in a discrete time setting where the agent faces a set of possible scenarios with known probabilities, obtains scenario information progressively by observing a signal, and updates posterior scenario probabilities through Bayesian learning.  One of the applications of the present paper takes inspiration from \cite{flora2023green} and extends it to the setting when even the initial probability of occurrence of every scenario is unknown to the agent, that is, the agent faces both scenario uncertainty and model ambiguity.


The remainder of the paper is structured as follows. In Section \ref{main}, we define the optimization problems and state the main theoretical results. In Section \ref{sec:example1} we discuss the first example: the problem of optimally selling a stock with ambiguous drift value. Section \ref{sec:example2} is devoted to the second example: optimal closure of a coal-fired power plant under \blue{model ambiguity on the set of transition scenarios}.  Proofs of the main theoretical results are gathered in the Appendix. 
 
 \section{Setting and main results}
\label{main}

\subsection{Setting and main assumptions}
Let $0<T<\infty$ be a time horizon and let $\left(\Omega, {\mathcal{F}},\mathbb F = \left({\mathcal{F}}_t\right)_{0 \leq t \leq T}, \mathbb{P}\right)$ be a filtered probability space satisfying the usual conditions. 
Let $\T$ be the set of all $\mathbb F$-stopping times $\tau$ on $[0,T]$.
Introduce an $\mathbb F$-adapted, right-continuous stochastic process $Y=(Y_t)_{0 \le t \le T}$, which corresponds to the process of risk factors observed by the agent.

Let $\mathcal S$ be a compact metric space indexing the possible scenarios (for example, a discrete set of $N$ possible scenarios together with the discrete metric). 
We denote by $\mathcal P$ a reference probability distribution on $\mathcal S$.\footnote{For the sake of clarity, we adopt the convention that measures associated to trajectories are denoted using \emph{blackboard bold} notation (e.g., $\mathbb{P}$), while measures on scenarios are denoted using \emph{calligraphic} notation (e.g., $\mathcal P$, $\mathcal Q$).}

We assume that for any scenario $\theta\in \mathcal S$, there exists a probability measure on $(\Omega,\mathcal F)$, denoted by $\mathbb P^\theta$, with the associated expectation operator denoted by $\mathbb{E}^\theta[\cdot]$. The following assumptions will be used throughout the paper. 
\begin{assumption}\label{ass:equivalence}
All probability measures $\mathbb P^{\theta}$, $\theta \in \mathcal S$, are absolutely continuous with respect to  $\mathbb P$.
\end{assumption}

\begin{assumption}\label{ass:integrability}
There exists a constant $C<\infty$ such that 
\begin{equation}\label{eq:integrabilityconditiononf}
\mathbb{E}^{\theta}\left[\sup_{0 \le t \le T} \left|Y_t\right|\right]<C\quad \forall \theta\in \mathcal S.
\end{equation}
\end{assumption}

\begin{assumption}\label{holder.ass}
For any sequence $(\tau_n) \in \T$ and any $\tau \in \T$ with $\tau_n \to \tau$ a.s. it holds
$$
\mathbb E^{\theta}[Y_{\tau_n} - Y_\tau] \to 0,
$$
uniformly on $\theta \in \mathcal S$.
\end{assumption}

The following technical assumption holds trivially when the set of scenarios $\mathcal S$ is finite. { We denote by $D^\theta$ the Radon-Nikodym derivative of $\mathbb P^\theta$ with respect to $\mathbb P$ and by $d(\cdot,\cdot)$ the distance on the metric space $\mathcal S$. If $\mathcal S$ is finite, we set $d(\theta_1,\theta_2) =0$ when $\theta_1 = \theta_2$ and $1$ otherwise.} 
\begin{assumption}\label{cont.ass}
$$
\sup_{\theta_1,\theta_2 \in \mathcal S: d(\theta_1,\theta_2)\leq h} \mathbb E[|D^{\theta_1}-D^{\theta_2}| \sup_{0\leq t\leq T} |Y_t|]\to 0
$$
as $h\to 0$. 
\end{assumption}

{Below, we provide two examples of settings where our assumptions 1-4 are satisfied. We start with an abstract general example. 
\begin{example}\label{gen.ex}
Let $\mathcal S$ be a compact subset of $L^2(\mathbb P)$, such that for every $D\in \mathcal S$, $D\geq 0$ a.s., and $\mathbb E[D]=1$.  For each $D\in \mathcal S$, the measure $\mathbb P^D$ is defined through its Radon-Nikodym derivative 
$$
\frac{d\mathbb P^D}{d\mathbb P} = D. 
$$
Let $Y$ be an Itô process with jumps of the form
$$
Y_t = Y_0 + \int_0^t \gamma_s ds + \int_0^t \sigma_s dW_s + \int_{0}^t \int_{\mathbb R} \zeta_s(x) \tilde \mu(ds,dx),
$$
where $\mathbb E[Y_0^2]<\infty$, $W$ is a Brownian motion, $\tilde\mu$ is a compensated Poisson random measure with intensity measure $ds \times \nu$, where $\nu$ is a Lévy measure, and $\gamma$, $\sigma$ and $\zeta$ are adapted random coefficients such that 
\begin{align}
\mathbb E\left[\int_0^t (\gamma^2_s + \sigma^2_s + \int_{\mathbb R}\zeta_s^2(x)\nu(dx) )ds\right] <\infty.\label{int.ass}
\end{align}
Assumption 1 is trivially satisfied. For Assumption 2, we write
$$
\mathbb E^{D}[\sup_{0\leq t\leq T} Y_t] \leq \mathbb E \left[D^2 \right]^{\frac{1}{2}}\mathbb E \left[ \sup_{0\leq t\leq T} Y_t^2  \right]^{\frac{1}{2}}.
$$
The first factor is uniformly bounded by compactness, and the second factor is finite by Assumption \ref{int.ass} (using Burkholder inequality for the martingale terms). For Assumption 3 we similarly write: 
$$
\big|\mathbb E^{D}[Y_{\tau_n} - Y_\tau]\big|\leq \mathbb E \left[D^2 \right]^{\frac{1}{2}}\mathbb E \left[(Y_{\tau_n} - Y_\tau )^2\right]^{\frac{1}{2}},
$$
and 
$$
\mathbb E \left[(Y_{\tau_n} - Y_\tau )^2\right] \leq 3 \mathbb E \left[\left(\int_{\tau}^{\tau_n}\gamma_t dt\right)^2\right] + 3 \mathbb E \left[\int_{\tau}^{\tau_n}\sigma^2_t dt\right]+3 \mathbb E \left[\int_{\tau}^{\tau_n}\int_{\mathbb R}\zeta^2_t(x)\nu(dx) dt\right],
$$
where all terms converge to zero by the dominated convergence theorem. 
Finally, for Assumption 4 we have:
\begin{multline*}
\sup_{D_1,D_2 \in \mathcal S: \|D_1-D_2\|_2\leq h} \mathbb E[|D_1-D_2| \sup_{0\leq t\leq T} |Y_t|] \\ \leq \sup_{D_1,D_2 \in \mathcal S: \|D_1-D_2\|_2\leq h} \mathbb E[|D_1-D_2|^2]^{\frac{1}{2}} \mathbb E[\sup_{0\leq t\leq T} |Y_t|^2]^{\frac{1}{2}}\leq h\mathbb E[\sup_{0\leq t\leq T} |Y_t|^2]^{\frac{1}{2}},
\end{multline*}
which clearly tends to zero as $h\to 0$. 
\end{example}

As we saw in the previous example, beyond the standard integrability assumptions, one condition that may be difficult to satisfy is the norm compactness of the set of measure changes corresponding to different scenarios in the space $L^2(\mathbb P)$. When the set $\mathcal S$ is finite-dimensional, such as when the different scenarios represent parameter uncertainty in a parametric model, compactness is generally easy. When the set $\mathcal S$ is infinite-dimensional, to obtain compactness one needs to impose more structure and regularity on the set of measure changes and use, for example, Sobolev-style embeddings. We discuss such an example below.

\begin{example}[Brownian motion with deterministic drift]
Let 
$$
Y_t = y + \sigma W_t,
$$
where $W$ is a Brownian motion, and let $\mathcal S$ be a subset of functions  from $[0,T]$ to $\mathbb R$, which is bounded in $H^1$, that is,
$$
\sup_{\varphi \in \mathcal S} \left\{\int_0^T  \varphi(t)^2 dt+ \int_0^T \dot \varphi(t)^2 dt\right\}<\infty 
$$
For each $\varphi \in \mathcal S$, the measure $\mathbb P^\varphi$ is defined through its Radon-Nikodym derivative
$$
D^\varphi = e^{\int_0^T \varphi_t d W_t - \frac{1}{2}\int_0^T\varphi_t^2 dt }. 
$$
Then, under $\mathbb P^\varphi$, $W^\varphi_t = W_t - \int_0^t \varphi_s ds$ is a Brownian motion, and the process $Y$ follows the dynamics
$$
Y_t  =y + \int_0^t \sigma\varphi_s ds + \sigma W^\varphi_t.
$$
The set $\mathcal S$ is compact in $L^2([0,T])$ by Rellich-Kondrachov theorem \cite[Theorem 6.3]{adams2003sobolev}, and the map $\varphi \mapsto D^\varphi$ is a continuous map from $\mathcal S$ to $L^2(\mathbb P)$, because
\begin{align*}
&\mathbb E[(D^{\varphi_1}- D^{\varphi_2})^2] = e^{\int_0^T\varphi_1(t)^2 dt}+e^{\int_0^T\varphi_2(t)^2 dt} - 2 e^{\int_0^T\varphi_1(t)\varphi_2(t) dt}\\
& = \left(e^{\frac{1}{2}\int_0^T\varphi_1(t)^2 dt}-e^{\frac{1}{2}\int_0^T\varphi_2(t)^2 dt}\right)^2 \\ &\qquad+ 2 e^{\frac{1}{2}\int_0^T\varphi_1(t)^2 dt+\frac{1}{2}\int_0^T\varphi_2(t)^2 dt}(1-e^{-\frac{1}{2}\int_0^T (\varphi_1(t)-\varphi_2(t))^2 dt})\\
&\leq e^{\max_{\varphi\in \mathcal S} \int_0^T \phi(t)^2 dt}\left\{\frac{1}{4} \left(\int_0^T |\phi_1(t)^2-\phi_2(t)^2| dt\right)^{{2}}+ \int_0^T (\phi_1(t)-\phi_2(t))^2 dt\right\}.
\end{align*}
Thus, assumptions of Example \ref{gen.ex} are satisfied. To define a probability measure on $\mathcal S$, consider the following orthonormal basis of $H^1([0,T])$: 
\begin{align*}
\phi_k(t) &= \frac{e_k(t)}{\sqrt{1+\lambda_k}},\quad k=0,1,2,\dots\\
e_0(t) &= \frac{1}{\sqrt{T}},\quad e_k(t) = \sqrt{\frac{2}{T}}\cos\left(\frac{2\pi k t}{T}\right),\quad k=1,2,\dots\\
\lambda_k &= \left(\frac{k\pi}{T}\right)^2,
\end{align*}
let $(u_k)_{k\geq 0}$ be a sequence of i.i.d.~random variables uniformly distributed on $[-1,1]$, and $(\alpha_k)$ a sequence of constants with $\sum_{k\geq 0}\alpha_k^2 <\infty$, and define
$$
X_t = \sum_{k\geq 0} u_k \alpha_k \phi_k(t).
$$
Then,
$$
\|X\|^2_{H^1} =\sum_{k\geq 0} u_k^2 \alpha_k^2 \leq \sum_{k\geq 0} \alpha_k^2 <\infty. 
$$
Thus, the support of the distribution of $X$ is a compact subset of $L^2([0,T])$. 
\end{example}

}
\subsection{Problem formulation}
 In the absence of model ambiguity concerns, the agent is interested in the optimal stopping problem
$$
 \sup_{\tau \in \T}\int_{\mathcal S} d \mathcal P(\theta)\mathbb E^\theta \left[Y_{\tau}\right].
$$
Following the smooth model of decision making under ambiguity of \cite{klibanoff2005smooth}, we assume that  the agent solves instead the optimization problem
\begin{equation}\label{eq:mainproblem}
\sup_{\tau \in \T} v^{-1}\left( \int_{\mathcal S} d\mathcal P(\theta)v\left(\mathbb E^{\theta}\left[Y_{\tau}\right]\right)\right),
\end{equation}
where $v$ is a suitable continuous function which characterizes the ambiguity concern: the agent is \emph{ambiguity loving} if $v$ is convex and \emph{ambiguity averse} if $v$ is concave.

Hereafter, we call $\mathcal M(\mathcal S)$ the set of probability measures on $\mathcal S$, endowed with the topology of weak convergence. Moreover, for any $\mathcal Q \in \mathcal M(\mathcal S)$ we denote by $\mathbb P^{\mathcal Q}$ the probability measure on $\mathcal B(S)\times \mathcal F$ defined by
\begin{equation}\label{eq:PQ}
\mathbb P^{\mathcal Q}(B) = \int_{\mathcal S\times \Omega} d\mathcal Q(\theta) d\mathbb P^\theta(\omega) \mathbf 1_{B}(\theta, \omega). 
\end{equation}
and by $\mathbb E^{\mathbb P^{\mathcal Q}}$
the associated expectation operator.


We focus on ambiguity aversion. In this case, Theorem 6 in \cite{drapeau2013risk} implies the following result. 
\begin{proposition}\label{prop:drapeau}
For any proper concave nondecreasing upper semicontinuous function $v:\mathbb{R}\to \mathbb{R}$, problem \eqref{eq:mainproblem}  admits the representation
\begin{equation}\label{eq:generaldualdrapeau}
\sup_{\tau \in \T} \inf_{\mathcal Q \in \mathcal M(\mathcal S)} G\left(\tau,\mathcal Q\right),
\end{equation}
where 
\begin{equation}\label{eq:Gfordrapeau}
G\left(\tau,\mathcal Q \right) = R\left(\mathcal Q , \mathbb E^{\mathbb P^{\mathcal Q}}[Y_\tau]\right),
\end{equation} 
for a unique function $R: \mathcal M(\mathcal S) \times \mathbb{R} \to \mathbb{R}$ such that:
\begin{enumerate}[(i)]
\item $R(\mathcal Q ,\cdot)$ is nondecreasing and right continuous for any $\mathcal Q \in \mathcal M(\mathcal S)$;
\item $R$ is jointly quasi-convex;

\item $\lim_{s\to+\infty}R(\mathcal Q ^1, s) = \lim_{s\to+\infty}R(\mathcal Q ^2, s)$ for any $\mathcal Q ^1, \mathcal Q ^2 \in \mathcal{M}(\mathcal S)$;
\item the function $R^+(\mathcal Q ,s):=\sup_{s'<s}R(\mathcal Q ,s')$ is lower semicontinuous in the first argument.
\end{enumerate}
\end{proposition}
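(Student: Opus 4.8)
The plan is to recognize, for each fixed stopping time, the objective in \eqref{eq:mainproblem} as the value of a quasi-concave, monotone functional acting on the scenario-dependent expected pay-off, and then to invoke the robust representation theorem of \cite{drapeau2013risk} (their Theorem 6). The role of the stopping time in this proposition is inert: the dual representation is established pointwise in $\tau$ and the supremum is taken afterwards, so no minimax argument is involved here.

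First I would fix $\tau \in \T$ and introduce the map $f_\tau : \mathcal S \to \reals$, $f_\tau(\theta) := \mathbb E^\theta[Y_\tau] = \mathbb E[D^\theta Y_\tau]$. By Assumption \ref{ass:integrability} it is well defined and uniformly bounded in $\theta$, while Assumption \ref{cont.ass} yields its continuity through the bound $|f_\tau(\theta_1)-f_\tau(\theta_2)| \le \mathbb E\big[|D^{\theta_1}-D^{\theta_2}|\sup_{0\le t\le T}|Y_t|\big]$. Hence $f_\tau$ belongs to the Banach lattice $\X := C(\mathcal S)$, which pairs with the space of finite signed measures on the compact metric space $\mathcal S$; the normalized positive elements of that dual are exactly $\mathcal M(\mathcal S)$, endowed with the topology of weak convergence. (When $\mathcal S$ is finite this is just $\reals^N$ paired with the simplex.)

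Next I would set $U : \X \to \reals$, $U(f) := v^{-1}\big(\int_{\mathcal S} v(f(\theta))\, d\mathcal P(\theta)\big)$ (with the usual conventions, and the generalized inverse, where $v$ is not finite-valued or not strictly monotone), so that the objective in \eqref{eq:mainproblem} equals $\sup_{\tau \in \T} U(f_\tau)$, and verify the hypotheses of Theorem 6 of \cite{drapeau2013risk} in its quasi-concave ``utility'' form: $U$ is nondecreasing for the pointwise order since $v$ and $v^{-1}$ are nondecreasing; $U$ is concave, hence quasi-concave, by Jensen's inequality applied with the concave $v$ inside the integral and once again at the outer level; $U$ is normalized, $U(c) = c$ for every constant $c$; and $U$ is upper semicontinuous — indeed sup-norm continuous, hence also weakly upper semicontinuous by concavity — via dominated convergence and the continuity of $v$ and of $v^{-1}$ on the relevant intervals (a proper concave upper semicontinuous function on $\reals$ is continuous on the interior of its domain). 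Theorem 6 of \cite{drapeau2013risk} then provides a unique minimal risk function $R$ enjoying properties (i)--(iv) with $U(f) = \inf_{\mathcal Q \in \mathcal M(\mathcal S)} R\big(\mathcal Q, \int_{\mathcal S} f\, d\mathcal Q\big)$ for every $f \in \X$. Finally, for each $\mathcal Q \in \mathcal M(\mathcal S)$, Fubini's theorem — applicable thanks to the uniform integrability in Assumption \ref{ass:integrability} and the definition \eqref{eq:PQ} of $\mathbb P^{\mathcal Q}$ — gives $\int_{\mathcal S} f_\tau\, d\mathcal Q = \int_{\mathcal S} \mathbb E^\theta[Y_\tau]\, d\mathcal Q(\theta) = \mathbb E^{\mathbb P^{\mathcal Q}}[Y_\tau]$, so that $U(f_\tau) = \inf_{\mathcal Q \in \mathcal M(\mathcal S)} R(\mathcal Q, \mathbb E^{\mathbb P^{\mathcal Q}}[Y_\tau]) = \inf_{\mathcal Q} G(\tau,\mathcal Q)$; taking the supremum over $\tau \in \T$ yields \eqref{eq:generaldualdrapeau}--\eqref{eq:Gfordrapeau}, with uniqueness of $R$ inherited from \cite{drapeau2013risk}.

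I expect the main obstacle to be the faithful translation between the concrete functional $U$ on $C(\mathcal S)$ and the abstract framework of \cite{drapeau2013risk}: fixing the dual pair so that the representing set is precisely $\mathcal M(\mathcal S)$ (rather than only measures absolutely continuous with respect to $\mathcal P$), handling the sign convention — their theorem is phrased for quasi-convex risk measures, so one passes through $f \mapsto -U(-f)$ and re-reads properties (i)--(iv) accordingly — and checking that the regularity established above is exactly the semicontinuity hypothesis under which their representation is exact. The remaining verifications — boundedness and continuity of $f_\tau$, monotonicity, concavity and normalization of $U$, and the Fubini identification — are routine given Assumptions \ref{ass:integrability} and \ref{cont.ass}.
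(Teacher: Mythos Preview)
Your proposal is correct and takes essentially the same approach as the paper: the paper does not give a separate proof of this proposition but simply states that it follows from Theorem~6 of \cite{drapeau2013risk}, and what you have written is precisely the verification that the functional $U(f)=v^{-1}\bigl(\int_{\mathcal S} v(f)\,d\mathcal P\bigr)$ on $C(\mathcal S)$ satisfies the hypotheses of that theorem, together with the Fubini identification $\int_{\mathcal S} f_\tau\,d\mathcal Q=\mathbb E^{\mathbb P^{\mathcal Q}}[Y_\tau]$. Your additional checks (continuity of $f_\tau$ via Assumption~\ref{cont.ass}, normalization, the sign-convention translation) go beyond what the paper spells out but are consistent with its intent.
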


\begin{remark}
The key insight from Proposition \ref{prop:drapeau} is that the original problem in \eqref{eq:mainproblem} can be expressed in terms of a penalized worst-case scenario approach. Specifically, the investor aims to find the optimal stopping time focusing on the most unfavourable choice of probability measure $\mathcal Q$, as captured by \eqref{eq:generaldualdrapeau}. The choice of probability measure is evaluated through the mapping $G$, given in \eqref{eq:Gfordrapeau}, which depends both on the probability measure $\mathcal Q$ and the expected gain upon stopping in $\tau$, i.e.,
$$\mathbb E^{\mathbb P^{\mathcal Q}}[Y_{\tau}]=\int_{\mathcal S} d\mathcal Q(\theta)\mathbb E ^{\theta}[Y_{\tau}].
$$
The mapping $G$ may  penalize the deviation of the measure $\mathcal Q$ from the reference probability measure $\mathcal P$ in an additive or multiplicative way, see Example \ref{ex:commonfunctions}. 
\end{remark}

{To be able to prove the minimax result, we shall impose the following additional alternative assumptions on the function $R$. They hold in most practical cases and are easy to verify in specific examples (see below).}
\begin{assumption}\label{continuity}
One of the following holds true:
\begin{itemize}
\item[(i)] The function $R$ is continuous in the second argument on $\mathbb R$ for all $\mathcal Q \in \mathcal M(\mathcal S)$;
\item[(ii)] The function $R$ is continuous in the second argument on $(0,\infty)$ for all $\mathcal Q \in \mathcal M(\mathcal S)$ and 
$$
\mathbb E^{\theta}[Y_\tau]>0,\quad \forall \tau \in \mathcal T, \quad \forall \theta \in \mathcal S.
$$
\end{itemize}
\end{assumption}


\begin{example}\label{ex:commonfunctions}
Among the most common choices for the function $v$ we mention the following:
\begin{itemize}
\item \textbf{Power function}:
\begin{align}
v(x) = \begin{cases}x^\lambda \quad &\text{for $x \ge 0$},\\
-\infty \quad &\text{for $x<0$} \end{cases}
\label{eq:powerlambdapositive}
\end{align}
for $\lambda \in (0,1)$ and 
\begin{align}
v(x) = \begin{cases}-x^\lambda \quad &\text{for $x \ge 0$},\\
-\infty \quad &\text{for $x<0$} \end{cases} \label{eq:powerlambdanegative}
\end{align}
for $\lambda <0$. Here the parameter $\lambda < 1$ characterizes the agent's ambiguity aversion: the case when $\lambda \to -\infty$ corresponds to full ambiguity aversion. 

We have dual representation \eqref{eq:generaldualdrapeau} with function $R$ in \eqref{eq:Gfordrapeau} given by
\begin{align}
R(\mathcal Q ,s):=s\mathbf 1_{s>0}\, \mathbb E^{\mathcal P}\left[\left(\frac{d\mathcal Q }{d\mathcal P}\right)^{\frac{\lambda}{\lambda-1}}\right]^{{\frac{1-\lambda}{\lambda}}},
\label{Rpower}
\end{align}
{with the convention that whenever the second factor is infinite, $R(\mathcal Q ,s)=+\infty$ for $s\geq 0$ and $R(\mathcal Q ,s) = 0$ for $s< 0$.} {In this case, for $\lambda < 0$, this function is continuous, so that Assumption \ref{continuity} (i) holds. On the other hand, for $\lambda\in (0,1)$, for $\mathcal Q $ such that the second factor is infinite, $R(\mathcal Q ,s)$ is continuous in $s$, as a function taking values in the extended real line, only on $(0,\infty)$, so that we are in the context of Assumption \ref{continuity} (ii). }

This means that  
\begin{equation}\label{eq:Gpower}
G(\tau,\mathcal Q ):=\left(\mathbb E^{\mathbb P^{\mathcal Q}}[Y_\tau]\right)^+\mathbb E^{\mathcal P}\left[\left(\frac{d\mathcal Q }{d\mathcal P}\right)^{\frac{\lambda}{\lambda-1}}\right]^{{\frac{1-\lambda}{\lambda}}},
\end{equation}
for $\lambda \in (-\infty, 0) \cup (0,1)$, with the same convention as above.


When $\lambda \to -\infty$, the second term on the right-hand side of \eqref{eq:Gpower} is equal to $1$, so that
$$
G(\tau,\mathcal Q ) = \left(\mathbb E^{\mathbb P^{\mathcal Q}}[Y_\tau] \right)^+.
$$
This is the full ambiguity aversion setting, where the agent wants to maximize the expectation under the worst-case prior measure $\mathcal Q $ without caring how much it diverges from $\mathcal P$.

On the other hand, when $\lambda \to 1$, the second term on the right-hand side of \eqref{eq:Gpower} gives $+\infty$ for any $\mathcal Q \ne \mathcal P$, so the agent only considers the original probability measure.

\item \textbf{Logarithmic function}: we define 
\begin{align}\label{logfunction}
v(x) = \begin{cases}\log(x) \quad &\text{for $x \ge 0$},\\
-\infty \quad &\text{for $x\le 0$}. \end{cases}
\end{align}

We have dual representation \eqref{eq:generaldualdrapeau} with function $R$ in \eqref{eq:Gfordrapeau} given by
$$
R(\mathcal Q ,s):=s\mathbf 1_{s>0} \exp\left(-
\mathbb E^{\mathcal P}\left[
\log\left(\frac{d \mathcal Q }{d \mathcal P}\right)\right]\right),
$$
that is,
$$
G(\tau,\mathcal Q ):=\left(\mathbb E^{\mathbb P^{\mathcal Q}}[Y_\tau]\right)^+ \exp\left(-
\mathbb E^{\mathcal P}\left[
\log\left(\frac{d \mathcal Q }{d \mathcal P}\right)\right]\right).
$$
The logarithmic function can also be obtained as the limit of the power function when $\lambda \to 0$. {Here, similarly to the power function, we are in the context of Assumption \ref{continuity} (ii). }

\item \textbf{Exponential function}: $v(x)=-e^{-\gamma x}$, for $\gamma >0$. We have the standard representation for the entropic risk measure with function $R$ in \eqref{eq:Gfordrapeau} of the following form:
$$
R(\mathcal Q ,s):=s- \frac{1}{\gamma}\mathbb E^{\mathcal Q}\left[
\log\left(\frac{d \mathcal Q }{d \mathcal P}\right)\right],
$$
that is,
\begin{equation}\label{eq:Gfunctionexp}
G(\tau,\mathcal Q ):=\mathbb E^{\mathbb P^{\mathcal Q}}[Y_{\tau}]-\frac{1}{\gamma}\mathbb E^{\mathcal Q}\left[
\log\left(\frac{d \mathcal Q }{d \mathcal P}\right)\right].
\end{equation}
Here we are in the context of Assumption \ref{continuity} (i). 
\end{itemize}
\end{example}

\subsection{Main results}
\blue{The formulation in problem \eqref{eq:generaldualdrapeau} allows to get rid of the non-linearity of  \eqref{eq:mainproblem}, but the problem is still untractable since the optimal stopping rule involves the distribution of the payoff over a set of probability measures, so that the worst case is potentially different for every stopping time. For this reason, we} now would like to prove that one can exchange the supremum and the infimum in \eqref{eq:generaldualdrapeau}, \blue{as this would allow to separate in a convenient way the original problem in an inner optimal stopping for any fixed measure and an outer optimization over the set of measures.}

First, we focus on discrete time optimal stopping problems and a finite set of scenarios. In this case, the existence of a saddle point may be established. The following theorem, whose proof is given in Appendix \ref{sec:proofexistence}, is the first main theoretical result of the paper. It is based on the compactness of the set of randomized stopping times, and a mapping 
\blue{from randomized to ordinary stopping times preserving the expected payoff under every scenario.}

\begin{theorem}\label{thm:existencesolution}
Let $v$ satisfy the assumptions of Proposition \ref{prop:drapeau}, let  Assumptions \ref{ass:equivalence}, \ref{ass:integrability} and \ref{continuity} be satisfied and suppose that the set of scenarios $\mathcal S$ is finite. Let $\mathbb{T}:=\left\{t_1, \ldots, t_{r}\right\}$ with $0<t_1<\cdots<t_{r}=T$.  Then there exists a saddle point: $\tau^* \in \mathcal{T}_{\mathbb{T}}$ and $\mathcal Q ^* \in \mathcal{M}(\mathcal{S})$ satisfying
$$
G(\tau,\mathcal Q ^*) \leq G(\tau^*,\mathcal Q ^*) \leq G(\tau^*,\mathcal Q )
$$
for any $\mathcal Q \in \mathcal M(\mathcal S)$ and $\tau \in \mathcal{T}_{\mathbb{T}}$, the set of stopping times with values in $\mathbb T$. In particular,
\begin{equation}\label{eq:minimaxthm1}
 G(\tau^*,\mathcal Q ^*) =\max _{\tau \in \mathcal{T}_{\mathbb{T}}} \min _{\mathcal Q \in \mathcal M(\mathcal S)} G(\tau,\mathcal Q ) =\min _{\mathcal Q \in \mathcal M(\mathcal S)} \max _{\tau \in \mathcal{T}_{\mathbb{T}}} G(\tau,\mathcal Q ).
\end{equation}
\end{theorem}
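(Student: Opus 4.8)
The plan is to apply a minimax theorem — most naturally Sion's minimax theorem — after replacing the set of ordinary stopping times $\mathcal{T}_{\mathbb{T}}$ by the larger, convex and compact set of \emph{randomized} stopping times, following \cite{belomestny2016optimal}. First I would set up the randomization: in discrete time with time grid $\mathbb{T} = \{t_1,\dots,t_r\}$, a randomized stopping time can be encoded as an $\mathbb{F}$-adapted process $(\gamma_{t_i})$ with $\gamma_{t_i}\in[0,1]$ nondecreasing, $\gamma_{t_r}=1$, where $\gamma_{t_i}$ represents the cumulative probability of having stopped by $t_i$; equivalently, one works with the associated measure on $\{t_1,\dots,t_r\}$ conditionally on each $\omega$. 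The key structural facts are: (a) every ordinary stopping time $\tau\in\mathcal{T}_{\mathbb{T}}$ corresponds to a randomized one, and the payoff functional $\mathbb{E}^{\mathbb{P}^{\mathcal{Q}}}[Y_\tau]$ extends \emph{affinely} to the randomized set; (b) the set of randomized stopping times is convex and compact in a suitable weak topology (this uses only that $\mathbb{T}$ is finite and the $L^1$-type integrability from Assumption \ref{ass:integrability}); and (c) — crucially — the sup over randomized stopping times of the payoff equals the sup over ordinary ones, because for fixed $\mathcal{Q}$ the map $\tau\mapsto \mathbb{E}^{\mathbb{P}^{\mathcal{Q}}}[Y_\tau]$ is linear in the randomization and hence attains its maximum at an extreme point, and the extreme points of the randomized set are exactly the ordinary stopping times. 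This last point is what lets us conclude the statement about $\mathcal{T}_{\mathbb{T}}$ rather than merely its convexification.

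Next I would verify the hypotheses of Sion's theorem for the function $(\gamma,\mathcal{Q})\mapsto G(\gamma,\mathcal{Q}) = R(\mathcal{Q},\mathbb{E}^{\mathbb{P}^{\mathcal{Q}}}[Y_\gamma])$ on (randomized stopping times) $\times\,\mathcal{M}(\mathcal S)$. On the $\gamma$ side: the map $\gamma\mapsto\mathbb{E}^{\mathbb{P}^{\mathcal{Q}}}[Y_\gamma]$ is affine and continuous, and $R(\mathcal{Q},\cdot)$ is nondecreasing and quasi-convex (Proposition \ref{prop:drapeau}(i)–(ii)); a nondecreasing quasi-convex function composed with an affine map is quasi-concave? — no: I need \emph{quasi-concavity in $\gamma$} for the sup side of Sion, so I should instead note that a monotone transformation of an affine function is quasi-linear, hence both quasi-convex and quasi-concave in $\gamma$, which suffices. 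Upper semicontinuity of $G(\cdot,\mathcal{Q})$ in $\gamma$ follows from continuity of $\gamma\mapsto\mathbb{E}^{\mathbb{P}^{\mathcal{Q}}}[Y_\gamma]$ together with right-continuity/monotonicity of $R(\mathcal{Q},\cdot)$ — here Assumption \ref{continuity} is exactly what I need: under (i), $R(\mathcal{Q},\cdot)$ is continuous on $\mathbb{R}$; under (ii), it is continuous on $(0,\infty)$ and the payoffs $\mathbb{E}^\theta[Y_\tau]$, hence $\mathbb{E}^{\mathbb{P}^{\mathcal{Q}}}[Y_\gamma]$, stay strictly positive, so continuity still applies along the relevant range. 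On the $\mathcal{Q}$ side: $\mathcal{M}(\mathcal S)$ is convex and, since $\mathcal S$ is finite (indeed compact metric), weakly compact; $\mathcal{Q}\mapsto\mathbb{E}^{\mathbb{P}^{\mathcal{Q}}}[Y_\tau] = \sum_\theta \mathcal{Q}(\{\theta\})\,\mathbb{E}^\theta[Y_\tau]$ is affine and continuous (finite sum!), and $R$ is jointly quasi-convex, so $\mathcal{Q}\mapsto G(\tau,\mathcal{Q})$ is quasi-convex; lower semicontinuity in $\mathcal{Q}$ comes from Proposition \ref{prop:drapeau}(iv) (the $R^+$ condition) combined with (iii), via the standard argument relating $R$ and $R^+$ along the value achieved. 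With quasi-concavity + u.s.c. in $\gamma$ on a convex compact set, and quasi-convexity + l.s.c. in $\mathcal{Q}$ on a convex compact set, Sion's theorem gives $\max_\gamma\min_{\mathcal{Q}} G = \min_{\mathcal{Q}}\max_\gamma G$, and compactness/semicontinuity on both sides yield that the max and min are attained, producing a saddle point $(\gamma^*,\mathcal{Q}^*)$.

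Finally I would descend from randomized to ordinary stopping times: by fact (c) above, $\max_\gamma G(\gamma,\mathcal{Q}^*) = \max_{\tau\in\mathcal{T}_{\mathbb{T}}} G(\tau,\mathcal{Q}^*)$ — here using that for fixed $\mathcal{Q}^*$, $G(\cdot,\mathcal{Q}^*)$ is a monotone function of an affine functional, hence maximized at an extreme point of the randomized set, i.e.\ at an ordinary $\tau^*\in\mathcal{T}_{\mathbb{T}}$. Then $G(\tau,\mathcal{Q}^*)\le G(\tau^*,\mathcal{Q}^*)$ for all $\tau\in\mathcal{T}_{\mathbb{T}}$, while $G(\tau^*,\mathcal{Q}^*)\le G(\tau^*,\mathcal{Q})$ for all $\mathcal{Q}$ follows from $(\gamma^*,\mathcal{Q}^*)$ being a saddle point once we check $G(\gamma^*,\mathcal{Q})\ge G(\gamma^*,\mathcal{Q}^*)$ and note $G(\tau^*,\mathcal{Q})\ge \min_{\mathcal{Q}'}\max_\gamma G = \max_\gamma\min_{\mathcal{Q}'}G(\gamma,\mathcal{Q}') = G(\tau^*,\mathcal{Q}^*)$ — a short chase using that $\tau^*$ achieves the outer max. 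The identity \eqref{eq:minimaxthm1} then follows immediately from the saddle-point inequalities. The main obstacle I anticipate is the careful handling of the semicontinuity of $G$ in both arguments when $R$ may take the value $+\infty$ (the power case with $\lambda\in(0,1)$ and a singular $\mathcal{Q}$), which is precisely why Assumption \ref{continuity} splits into cases (i) and (ii); one must check that the positivity in case (ii) genuinely confines all relevant evaluations of $R(\mathcal{Q},\cdot)$ to the region $(0,\infty)$ where it is well-behaved, and that the $+\infty$ values on the $\mathcal{Q}$-side are compatible with lower semicontinuity (which they are, being a closed condition). A secondary technical point is confirming that the weak topology on randomized stopping times used for compactness is the same one for which $\gamma\mapsto\mathbb{E}^{\mathbb{P}^{\mathcal{Q}}}[Y_\gamma]$ is continuous — this is where Assumption \ref{ass:integrability} (uniform $L^1$ bound on $\sup_t|Y_t|$) enters, and in discrete time it is straightforward.
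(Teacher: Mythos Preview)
Your overall architecture---randomize, apply Sion on $\mathcal{P}^\infty_{\mathbb T}\times\mathcal{M}(\mathcal S)$, then descend to ordinary stopping times---matches the paper exactly, and your verification of the Sion hypotheses (quasi-concavity/convexity, semicontinuity via Assumption \ref{continuity} and Proposition \ref{prop:drapeau}) is essentially the content of the paper's Lemma \ref{saddle.lm}. The problem is in your descent step.

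Your fact (c) is correct \emph{for fixed} $\mathcal Q$: $\max_\gamma G(\gamma,\mathcal Q)=\max_{\tau\in\mathcal T_{\mathbb T}}G(\tau,\mathcal Q)$ because $G(\cdot,\mathcal Q)$ is a monotone transform of an affine functional. This gives $\min_{\mathcal Q}\max_\gamma G=\min_{\mathcal Q}\max_\tau G$. But you still need $\max_\tau\min_{\mathcal Q}G=\max_\gamma\min_{\mathcal Q}G$, and the extreme-point argument does not deliver this: $\gamma\mapsto\min_{\mathcal Q}G(\gamma,\mathcal Q)$ is only quasi-concave, and quasi-concave functions are typically maximized in the interior, not at extreme points. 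Concretely, your chain ``$G(\tau^*,\mathcal Q)\ge \min_{\mathcal Q'}\max_\gamma G$'' is unjustified: $\tau^*$ maximizes $G(\cdot,\mathcal Q^*)$, but for another $\mathcal Q$ there is no reason $G(\tau^*,\mathcal Q)$ should dominate the saddle value. Equivalently, you have $G(\tau^*,\mathcal Q^*)=G(\gamma^*,\mathcal Q^*)$, but $G(\tau^*,\cdot)$ and $G(\gamma^*,\cdot)$ can be different functions of $\mathcal Q$, so the inequality $G(\gamma^*,\mathcal Q)\ge G(\gamma^*,\mathcal Q^*)$ does not transfer to $\tau^*$.

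The paper closes this gap with a stronger replacement principle (its Lemma \ref{thin.lm}, via the ``thin set'' argument of Kingman and Proposition C.3 of \cite{belomestny2016optimal}): because $\mathcal S$ is finite and the space is non-atomic, for \emph{any} $(f_t^*)\in\mathcal P^\infty_{\mathbb T}$ there is an ordinary $(A_t^*)\in\mathcal P_{\mathbb T}$ with $\mathbb E^\theta[f_t^*Y_t]=\mathbb E^\theta[\mathbf 1_{A_t^*}Y_t]$ for every $t\in\mathbb T$ and \emph{every} $\theta\in\mathcal S$ simultaneously. Since $G(\gamma,\mathcal Q)$ depends on $\gamma$ only through the vector $(\mathbb E^\theta[Y_\gamma])_{\theta\in\mathcal S}$, this yields $G(\tau^*,\mathcal Q)=L(\mathcal Q,(f_t^*))$ for all $\mathcal Q$, and the saddle-point inequalities for $(\gamma^*,\mathcal Q^*)$ carry over verbatim to $(\tau^*,\mathcal Q^*)$. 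Your extreme-point reasoning gives matching only for a single $\mathcal Q^*$; the thin-set lemma is precisely the missing ingredient that upgrades this to all $\mathcal Q$ at once.
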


{In our second minimax result, this theorem is extended to a general set of scenarios. In this case, the one-to-one 
correspondence between randomized and ordinary stopping times breaks down, and the existence of a saddle point is not guaranteed. The proof is also given in Appendix \ref{sec:proofexistence}.} 

\begin{theorem}\label{thm:infscenarios}
{Let $v$ satisfy the assumptions of Proposition \ref{prop:drapeau} let  Assumptions \ref{ass:equivalence}, \ref{ass:integrability}, \ref{cont.ass} and \ref{continuity} be satisfied. Let $\mathbb{T}:=\left\{t_1, \ldots, t_{r}\right\}$ with $0<t_1<\cdots<t_{r}=T$.  Then 
\begin{equation}\label{eq:minimaxthm2}
\sup_{\tau \in \mathcal{T}_{\mathbb{T}}} \inf_{\mathcal Q \in \mathcal M(\mathcal S)} G(\tau,\mathcal Q ) =\inf_{\mathcal Q \in \mathcal M(\mathcal S)} \sup _{\tau \in \mathcal{T}_{\mathbb{T}}} G(\tau,\mathcal Q ).
\end{equation}}
\end{theorem}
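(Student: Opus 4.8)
The plan is to deduce Theorem \ref{thm:infscenarios} from Theorem \ref{thm:existencesolution} by an approximation argument on the scenario space, exploiting the fact that $\mathcal S$ is a compact metric space and hence can be exhausted by finite $\varepsilon$-nets. First I would note that the inequality $\sup_\tau \inf_{\mathcal Q} G \le \inf_{\mathcal Q}\sup_\tau G$ is automatic, so the whole task is the reverse inequality. Fix $\varepsilon>0$ and choose a finite $\varepsilon$-net $\mathcal S_\varepsilon = \{\theta_1,\dots,\theta_{k}\}\subset \mathcal S$; let $\pi_\varepsilon:\mathcal S\to\mathcal S_\varepsilon$ be a Borel measurable ``nearest point'' projection. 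For a measure $\mathcal Q\in\mathcal M(\mathcal S)$, push it forward to $\mathcal Q_\varepsilon := (\pi_\varepsilon)_*\mathcal Q \in \mathcal M(\mathcal S_\varepsilon)$, and conversely any $\mathcal Q'\in\mathcal M(\mathcal S_\varepsilon)$ is trivially an element of $\mathcal M(\mathcal S)$. Theorem \ref{thm:existencesolution} applied on $\mathcal S_\varepsilon$ (the reference measure being $\mathcal P$ conditioned/projected appropriately, or one simply restricts attention to the relevant $R$) gives
\begin{equation*}
\sup_{\tau\in\mathcal T_{\mathbb T}}\inf_{\mathcal Q'\in\mathcal M(\mathcal S_\varepsilon)} G(\tau,\mathcal Q') = \inf_{\mathcal Q'\in\mathcal M(\mathcal S_\varepsilon)}\sup_{\tau\in\mathcal T_{\mathbb T}} G(\tau,\mathcal Q').
\end{equation*}

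The heart of the argument is then to show that both sides of this finite-scenario identity converge, as $\varepsilon\to 0$, to the corresponding sides of \eqref{eq:minimaxthm2}. This is where Assumption \ref{cont.ass} enters: it is exactly the condition guaranteeing that replacing $\theta$ by a nearby $\pi_\varepsilon(\theta)$ perturbs $\mathbb E^\theta[Y_\tau]$ by a quantity tending to $0$ uniformly in $\tau\in\mathcal T$ and in the starting scenario, because
\begin{equation*}
\bigl|\mathbb E^{\theta}[Y_\tau] - \mathbb E^{\pi_\varepsilon(\theta)}[Y_\tau]\bigr| \le \mathbb E\bigl[|D^\theta - D^{\pi_\varepsilon(\theta)}|\sup_{0\le t\le T}|Y_t|\bigr] \le \omega(\varepsilon),
\end{equation*}
where $\omega(\varepsilon)\to 0$. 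Consequently $\mathbb E^{\mathbb P^{\mathcal Q}}[Y_\tau] = \int d\mathcal Q(\theta)\mathbb E^\theta[Y_\tau]$ and $\mathbb E^{\mathbb P^{\mathcal Q_\varepsilon}}[Y_\tau] = \int d\mathcal Q(\theta)\mathbb E^{\pi_\varepsilon(\theta)}[Y_\tau]$ differ by at most $\omega(\varepsilon)$, uniformly in $\tau$ and $\mathcal Q$. Feeding this into the structure of $G(\tau,\mathcal Q) = R(\mathcal Q,\mathbb E^{\mathbb P^{\mathcal Q}}[Y_\tau])$ and using the continuity of $R$ in its second argument (Assumption \ref{continuity}), together with the fact that the penalty term $R$ ``only sees'' $\mathcal Q$ through quantities that are stable under pushforward — here one must check, case by case or via the quasi-convexity and lower semicontinuity properties (i)--(iv) of Proposition \ref{prop:drapeau}, that $R(\mathcal Q_\varepsilon,\cdot)\le R(\mathcal Q,\cdot) + o(1)$ and symmetrically — one concludes that $\sup_\tau\inf_{\mathcal Q}G$ over $\mathcal M(\mathcal S)$ is within $o(1)$ of the same expression over $\mathcal M(\mathcal S_\varepsilon)$, and likewise for the inf-sup.

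More precisely, I would run the following chain. For the easy direction applied at finite level: $\inf_{\mathcal Q\in\mathcal M(\mathcal S)}\sup_\tau G(\tau,\mathcal Q) \le \inf_{\mathcal Q'\in\mathcal M(\mathcal S_\varepsilon)}\sup_\tau G(\tau,\mathcal Q')$ since $\mathcal M(\mathcal S_\varepsilon)\subset\mathcal M(\mathcal S)$. For the nontrivial direction: given any $\mathcal Q\in\mathcal M(\mathcal S)$, its pushforward $\mathcal Q_\varepsilon\in\mathcal M(\mathcal S_\varepsilon)$ satisfies $\sup_\tau G(\tau,\mathcal Q_\varepsilon)\le \sup_\tau G(\tau,\mathcal Q) + \delta(\varepsilon)$ with $\delta(\varepsilon)\to 0$, whence $\inf_{\mathcal Q'\in\mathcal M(\mathcal S_\varepsilon)}\sup_\tau G(\tau,\mathcal Q') \le \inf_{\mathcal Q\in\mathcal M(\mathcal S)}\sup_\tau G(\tau,\mathcal Q) + \delta(\varepsilon)$. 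Thus the two inf-sup values agree in the limit. An analogous (and slightly easier) pair of inequalities handles the sup-inf side: every $\tau$ is admissible at both levels, and for fixed $\tau$, $\inf_{\mathcal Q'\in\mathcal M(\mathcal S_\varepsilon)}G(\tau,\mathcal Q')$ and $\inf_{\mathcal Q\in\mathcal M(\mathcal S)}G(\tau,\mathcal Q)$ differ by at most $\delta(\varepsilon)$, using the pushforward in one direction and the inclusion in the other. Combining with the finite-$\mathcal S_\varepsilon$ minimax identity from Theorem \ref{thm:existencesolution} and letting $\varepsilon\to 0$ yields \eqref{eq:minimaxthm2}.

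The main obstacle I anticipate is the control of the penalty term $R(\mathcal Q_\varepsilon,\cdot)$ versus $R(\mathcal Q,\cdot)$ under pushforward: unlike the expectation term $\mathbb E^{\mathbb P^{\mathcal Q}}[Y_\tau]$, which is handled cleanly by Assumption \ref{cont.ass}, the divergence-type penalties (relative entropy, $L^p$-type terms as in \eqref{Rpower}) are not obviously monotone under pushforward in the direction we need, and in general a pushforward \emph{decreases} such convex divergences (by Jensen / data-processing), which is the helpful direction for bounding $\sup_\tau G(\tau,\mathcal Q_\varepsilon)\le \sup_\tau G(\tau,\mathcal Q)+\delta(\varepsilon)$ but must be argued carefully so as not to need a matching lower bound. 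The cleanest route may be to avoid pushforward of $\mathcal Q$ altogether on the penalty and instead argue purely at the level of the abstract properties (i)--(iv) of $R$ in Proposition \ref{prop:drapeau}: the lower semicontinuity in (iv) plus compactness of $\mathcal M(\mathcal S)$ in the weak topology should let one pass to the limit in the infima directly, treating the finite-net approximation only to invoke Theorem \ref{thm:existencesolution} and using a diagonal/subsequence extraction of near-optimal $\mathcal Q^\varepsilon$. A secondary technical point is measurability of the projection $\pi_\varepsilon$ and the fact that $\mathbb P^{\mathcal Q_\varepsilon}$ is still a well-defined measure of the form \eqref{eq:PQ}; this is routine since $\mathcal S_\varepsilon$ is finite.
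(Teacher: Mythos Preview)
Your high-level plan --- discretise $\mathcal S$ by an $\varepsilon$-net $\mathcal S_\varepsilon$, invoke Theorem~\ref{thm:existencesolution} on the finite problem, then let $\varepsilon\to 0$ --- has a genuine gap precisely at the point you yourself flag. The difficulty with the penalty term is not a technicality but a structural obstruction: for a non-atomic reference $\mathcal P$ on $\mathcal S$, every $\mathcal Q'\in\mathcal M(\mathcal S_\varepsilon)$ is singular with respect to $\mathcal P$, so in the explicit examples (power, logarithmic, exponential) $R(\mathcal Q',\cdot)$ is identically $+\infty$ (or the entropy penalty is $+\infty$), and your inclusion inequality $\inf_{\mathcal M(\mathcal S)}\sup_\tau G \le \inf_{\mathcal M(\mathcal S_\varepsilon)}\sup_\tau G$ becomes vacuous. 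If instead you set up a parallel problem on $\mathcal S_\varepsilon$ with pushed-forward reference $\mathcal P_\varepsilon$ and its own $R_\varepsilon$, then Theorem~\ref{thm:existencesolution} yields minimax for $G_\varepsilon$, not for the restriction of $G$; relating $R_\varepsilon(\mathcal Q_\varepsilon,\cdot)$ to $R(\mathcal Q,\cdot)$ via data processing gives at best one inequality, and you need two. Your fallback suggestion (``argue purely at the level of the abstract properties of $R$'') is not a proof.

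The paper circumvents this entirely by a different decomposition. It first establishes a saddle point (Lemma~\ref{saddle.lm}, via Sion) for the relaxed functional $L(\mathcal Q,(f_t))=R(\mathcal Q,\sum_t\int d\mathcal Q\,\mathbb E^\theta[f_tY_t])$ over the \emph{full} $\mathcal M(\mathcal S)$ against \emph{randomized} stopping times $\mathcal P_{\mathbb T}^\infty$ --- both sets compact and convex, so Sion applies directly with no discretisation of $\mathcal S$. The $\varepsilon$-net is used only afterwards, and only on the stopping-time side: given the saddle $(f_t^*)$, Lemma~\ref{thin.lm} produces, for each finite $\mathcal S^n\subset\mathcal S$, an ordinary stopping partition $(A_t^n)$ matching $\mathbb E^\theta[f_t^*Y_t]$ exactly for $\theta\in\mathcal S^n$, and Assumption~\ref{cont.ass} controls the error for the remaining $\theta$. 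The penalty $R(\mathcal Q,\cdot)$ is never discretised or pushed forward; it lives on $\mathcal M(\mathcal S)$ throughout, and only its monotonicity and lower semicontinuity are used to pass to the limit. In short: the paper discretises to turn a randomized stopping time into an ordinary one (which only touches the expectation argument of $R$), whereas you discretise the measure space itself (which forces you to touch $R$ in its first argument, and that is where the argument breaks).
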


We finally extend our minimax result to arbitrary stopping times. 
The proof of the following theorem is provided in Appendix \ref{sec:proofminimax}.
\begin{theorem}\label{thm:minmax}
Let $v$ satisfy the assumptions of Proposition \ref{prop:drapeau} and let Assumptions \ref{ass:equivalence}, \ref{ass:integrability}, \ref{holder.ass}, \ref{cont.ass} and \ref{continuity} be satisfied. 

Then,
\begin{equation}\label{eq:minimaxthm3}
\sup_{\tau \in \T} \inf_{\mathcal Q \in \mathcal M(\mathcal S)} G\left(\tau,\mathcal Q \right)=  \inf_{\mathcal Q \in \mathcal M(\mathcal S)}\sup_{\tau \in \T} G\left(\tau,\mathcal Q\right).
\end{equation}
\end{theorem}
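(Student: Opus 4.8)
The plan is to deduce the general-stopping-time minimax identity \eqref{eq:minimaxthm3} from the discrete-time version \eqref{eq:minimaxthm2} of Theorem \ref{thm:infscenarios} by a limiting argument over finer and finer deterministic time grids, exploiting the path-regularity built into Assumptions \ref{holder.ass} and \ref{cont.ass}. Concretely, let $\mathbb{T}^n = \{t_1^n, \dots, t_{r_n}^n\}$ be a sequence of nested partitions of $[0,T]$ with $t_{r_n}^n = T$ and mesh tending to zero, and let $\mathcal{T}_{\mathbb{T}^n} \subseteq \mathcal{T}$ be the corresponding sets of grid-valued stopping times, which increase up to a dense subfamily of $\mathcal{T}$. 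For each $n$, Theorem \ref{thm:infscenarios} gives
\begin{equation}\label{eq:planstep}
\sup_{\tau \in \mathcal{T}_{\mathbb{T}^n}} \inf_{\mathcal Q \in \mathcal M(\mathcal S)} G(\tau,\mathcal Q) = \inf_{\mathcal Q \in \mathcal M(\mathcal S)} \sup_{\tau \in \mathcal{T}_{\mathbb{T}^n}} G(\tau,\mathcal Q).
\end{equation}
The weak inequality $\sup_\tau \inf_{\mathcal Q} G \le \inf_{\mathcal Q}\sup_\tau G$ in \eqref{eq:minimaxthm3} is automatic, so the work is all in the reverse inequality.

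The first step is an approximation lemma: for any $\tau \in \mathcal{T}$ there is a sequence $\tau_n \in \mathcal{T}_{\mathbb{T}^n}$ with $\tau_n \to \tau$ a.s. (e.g. $\tau_n := \min\{t_k^n : t_k^n \ge \tau\}$, so that $\tau_n \downarrow \tau$), and I must show $G(\tau_n, \mathcal Q) \to G(\tau,\mathcal Q)$ in a way that is uniform in $\mathcal Q \in \mathcal M(\mathcal S)$. Writing $G(\tau,\mathcal Q) = R(\mathcal Q, \mathbb E^{\mathbb P^{\mathcal Q}}[Y_\tau])$ and $\mathbb E^{\mathbb P^{\mathcal Q}}[Y_\tau] = \int_{\mathcal S} d\mathcal Q(\theta)\, \mathbb E^\theta[Y_\tau]$, Assumption \ref{holder.ass} gives $\sup_{\theta \in \mathcal S}|\mathbb E^\theta[Y_{\tau_n}] - \mathbb E^\theta[Y_\tau]| \to 0$, hence $\sup_{\mathcal Q}|\mathbb E^{\mathbb P^{\mathcal Q}}[Y_{\tau_n}] - \mathbb E^{\mathbb P^{\mathcal Q}}[Y_\tau]| \to 0$. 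The continuity of $R(\mathcal Q,\cdot)$ from Assumption \ref{continuity} then transfers this to $G$; one has to argue the continuity is effectively uniform in $\mathcal Q$ on the relevant bounded range of the second argument (using that $\mathbb E^\theta[\sup_t|Y_t|] < C$ from Assumption \ref{ass:integrability} keeps all the arguments in a fixed compact interval, or in case (ii) bounded away from $0$), and invoke the explicit forms of $R$ from Example \ref{ex:commonfunctions} if a softer argument is not available — though I would prefer to phrase it abstractly. From this I get
$$\sup_{\tau\in\mathcal T}\inf_{\mathcal Q} G(\tau,\mathcal Q) \;\ge\; \inf_{\mathcal Q} G(\tau,\mathcal Q) \;=\; \lim_n \inf_{\mathcal Q} G(\tau_n,\mathcal Q) \;\ge\; \limsup_n\, \sup_{\sigma\in\mathcal T_{\mathbb T^n}}\inf_{\mathcal Q} G(\sigma,\mathcal Q)\text{?}$$
— actually the cleaner route is: $\sup_{\tau\in\mathcal T_{\mathbb T^n}}\inf_{\mathcal Q}G(\tau,\mathcal Q)$ is nondecreasing in $n$ and bounded above by $\sup_{\tau\in\mathcal T}\inf_{\mathcal Q}G(\tau,\mathcal Q)$; conversely given any $\tau\in\mathcal T$, the approximants $\tau_n$ satisfy $\inf_{\mathcal Q}G(\tau_n,\mathcal Q)\to\inf_{\mathcal Q}G(\tau,\mathcal Q)$ by the uniform convergence just established, so $\lim_n \sup_{\mathcal T_{\mathbb T^n}}\inf_{\mathcal Q}G \ge \inf_{\mathcal Q}G(\tau,\mathcal Q)$ for every $\tau$, whence $\lim_n \sup_{\mathcal T_{\mathbb T^n}}\inf_{\mathcal Q}G = \sup_{\mathcal T}\inf_{\mathcal Q}G$.

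The second step handles the right-hand sides. By \eqref{eq:planstep} the quantities $\inf_{\mathcal Q}\sup_{\mathcal T_{\mathbb T^n}}G(\tau,\mathcal Q)$ coincide with the left-hand ones, hence also converge to $\sup_{\mathcal T}\inf_{\mathcal Q}G$; it remains to identify $\lim_n \inf_{\mathcal Q}\sup_{\mathcal T_{\mathbb T^n}}G(\tau,\mathcal Q)$ with $\inf_{\mathcal Q}\sup_{\mathcal T}G(\tau,\mathcal Q)$. One inequality is clear since $\mathcal T_{\mathbb T^n}\subseteq\mathcal T$ gives $\inf_{\mathcal Q}\sup_{\mathcal T_{\mathbb T^n}}G\le\inf_{\mathcal Q}\sup_{\mathcal T}G$, so $\sup_{\mathcal T}\inf_{\mathcal Q}G \le \inf_{\mathcal Q}\sup_{\mathcal T}G$, which is the trivial direction again. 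For the nontrivial direction I need $\inf_{\mathcal Q}\sup_{\mathcal T}G(\tau,\mathcal Q) \le \lim_n \inf_{\mathcal Q}\sup_{\mathcal T_{\mathbb T^n}}G(\tau,\mathcal Q) = \sup_{\mathcal T}\inf_{\mathcal Q}G$. This is where the approximation lemma is used once more but now pointwise-in-$\mathcal Q$ suffices: for fixed $\mathcal Q$, $\sup_{\mathcal T_{\mathbb T^n}}G(\tau,\mathcal Q)\uparrow\sup_{\mathcal T}G(\tau,\mathcal Q)$ because any $\tau\in\mathcal T$ is approximable by grid stopping times $\tau_n$ with $G(\tau_n,\mathcal Q)\to G(\tau,\mathcal Q)$. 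Then a standard interchange: $\inf_{\mathcal Q}\sup_{\mathcal T}G = \inf_{\mathcal Q}\sup_n\sup_{\mathcal T_{\mathbb T^n}}G = \inf_{\mathcal Q}\lim_n\uparrow$, and since the $\sup_{\mathcal T_{\mathbb T^n}}G(\cdot,\mathcal Q)$ increase in $n$, $\inf_{\mathcal Q}\lim_n(\cdots) = \lim_n\inf_{\mathcal Q}(\cdots)$ provided the infima are attained or an $\varepsilon$-argument is run (monotone limits commute with infima over an increasing sequence of functions). Combining, $\inf_{\mathcal Q}\sup_{\mathcal T}G = \lim_n\inf_{\mathcal Q}\sup_{\mathcal T_{\mathbb T^n}}G = \lim_n\sup_{\mathcal T_{\mathbb T^n}}\inf_{\mathcal Q}G = \sup_{\mathcal T}\inf_{\mathcal Q}G$, which is \eqref{eq:minimaxthm3}.

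The main obstacle I anticipate is the \emph{uniformity in $\mathcal Q$} of the convergence $G(\tau_n,\mathcal Q)\to G(\tau,\mathcal Q)$ in Step 1 — it is essential for passing the limit through $\sup_{\mathcal T_{\mathbb T^n}}$ and $\inf_{\mathcal Q}$ simultaneously. Assumption \ref{holder.ass} delivers uniformity of $\mathbb E^\theta[Y_{\tau_n}-Y_\tau]\to 0$ over $\theta\in\mathcal S$, which upgrades to uniformity over $\mathcal Q\in\mathcal M(\mathcal S)$ by linearity of $\mathcal Q\mapsto\mathbb E^{\mathbb P^{\mathcal Q}}[Y_\tau]$; the remaining subtlety is that $R(\mathcal Q,\cdot)$ is only assumed continuous, not uniformly equicontinuous in $\mathcal Q$, so I must either (a) note that all second-arguments lie in a fixed compact set $[-C,C]$ (or $(0,C]$ under Assumption \ref{continuity}(ii)) thanks to Assumption \ref{ass:integrability} and use a lower-semicontinuity/monotonicity trick rather than full continuity — e.g. the monotone one-sided convergence $\tau_n\downarrow\tau$ together with right-continuity of $Y$ controls the sign of $Y_{\tau_n}-Y_\tau$ in an appropriate averaged sense — or (b) fall back to the explicit formulas for $R$ in Example \ref{ex:commonfunctions}, where each $R(\mathcal Q,s)$ is of the form $s^+\cdot c(\mathcal Q)$ or $s - c(\mathcal Q)$ with $c(\mathcal Q)\ge 0$ independent of $s$, making uniform continuity in $s$ uniform in $\mathcal Q$ trivially once $s$ ranges over a compact set and the $+\infty$ conventions are handled via the case split in Assumption \ref{continuity}. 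Route (b) is the safe one and I would present it if the abstract argument in route (a) does not close cleanly. A secondary technical point is justifying $\sup_{\mathcal T_{\mathbb T^n}}G(\tau,\mathcal Q)\uparrow\sup_{\mathcal T}G(\tau,\mathcal Q)$ and the monotone-limit/infimum interchange in Step 2, which is routine given the nestedness of the grids and density of $\bigcup_n\mathcal T_{\mathbb T^n}$ in $\mathcal T$ for a.s. convergence.
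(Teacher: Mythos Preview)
Your overall strategy---reduce to discrete grids via Theorem \ref{thm:infscenarios} and then pass to the limit---is exactly the paper's. The execution, however, contains a real gap.

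The problematic claim is in Step 2: ``since the $\sup_{\mathcal T_{\mathbb T^n}}G(\cdot,\mathcal Q)$ increase in $n$, $\inf_{\mathcal Q}\lim_n(\cdots) = \lim_n\inf_{\mathcal Q}(\cdots)$\ldots\ (monotone limits commute with infima over an increasing sequence of functions).'' This is simply false. For an increasing sequence $f_n\uparrow f$ one always has $\lim_n\inf_{\mathcal Q} f_n(\mathcal Q)\le \inf_{\mathcal Q} f(\mathcal Q)$, but the reverse inequality---which is precisely the direction you need---can fail (take $f_n(k)=\mathbf 1_{k\le n}$ on $k\in\mathbb N$). An $\varepsilon$-argument does not rescue it, and ``infima attained'' does not either, unless you additionally control where the minimizers go. So Step~2 as written does not close, and without it Step~1 only reproduces the trivial inequality.

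The paper's remedy is to exploit the compactness of $\mathcal M(\mathcal S)$ (in the weak topology) together with the semicontinuity built into $R$. Concretely: pick near-minimizers $\mathcal Q^n$ of $\sup_{\tau\in\mathcal T_{\mathbb T^n}}G(\tau,\cdot)$, extract a subsequence $\mathcal Q^n\to\overline{\mathcal Q}$, and then show
\[
\liminf_n\, R\!\left(\mathcal Q^n,\ \sup_{\tau\in\mathcal T_{\mathbb T^n}}\mathbb E^{\mathbb P^{\mathcal Q^n}}[Y_\tau]\right)\ \ge\ R\!\left(\overline{\mathcal Q},\ \sup_{\tau\in\mathcal T}\mathbb E^{\mathbb P^{\overline{\mathcal Q}}}[Y_\tau]\right)\ \ge\ \inf_{\mathcal Q}\sup_{\tau\in\mathcal T}G(\tau,\mathcal Q).
\]
The first inequality uses (i) monotonicity of $R$ in the second argument to pull the sup inside, (ii) lower semicontinuity of $R$ in the first argument and continuity in the second (Proposition~\ref{prop:drapeau} and Assumption~\ref{continuity}), and (iii) Assumptions~\ref{holder.ass} and~\ref{cont.ass} to show $\liminf_m\sup_{\tau\in\mathcal T_{\mathbb T^m}}\mathbb E^{\mathbb P^{\mathcal Q^m}}[Y_\tau]\ge\sup_{\tau\in\mathcal T}\mathbb E^{\mathbb P^{\overline{\mathcal Q}}}[Y_\tau]$ via a near-optimal $\bar\tau$ for $\overline{\mathcal Q}$ and its grid approximants. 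This compactness argument replaces your Step~2 and, as a bonus, makes Step~1 entirely unnecessary---so the equicontinuity obstacle you flagged (and could only resolve by retreating to the explicit examples) never arises.
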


\begin{remark}
The results presented in Theorems \ref{thm:existencesolution}, \ref{thm:infscenarios} and \ref{thm:minmax} allow for a fundamental transformation of the original, non-standard optimal stopping problem under ambiguity aversion. Indeed, the inner problem on the right-hand side of equations \eqref{eq:minimaxthm1}-\eqref{eq:minimaxthm3} is a classical optimal stopping problem under the measure $\mathbb P^\mathcal Q$ defined in \eqref{eq:PQ}, and it can be therefore solved using standard methods \blue{through the dynamic programming principle and backward induction, see for example the numerical applications in Sections \ref{sec:example1} and \ref{sec:example2}}. 
\end{remark}

\blue{
\begin{remark}
The proofs of Theorems  \ref{thm:infscenarios} and \ref{thm:minmax} given in Appendices \ref{sec:proofexistence} and \ref{sec:proofminimax}, respectively, also provide the construction of a nearly-optimal stopping time taking values on a discrete set of times and based on a finite set of scenarios. In particular, it can be seen that for any $\delta>0$, there exist $\bar{m}, \bar{n} \in \mathbb{N}$ such that for all $m \ge \bar{m}$, $n \ge \bar{n}$ there exists a stopping time $\tau^{n,m}$ taking values in  
$\mathbb{T}_m:=\{kT/m,k=1,\dots,m\}$ and selected according to the expectations correseponding to a subset $\mathcal{S}_n \subseteq \mathcal{S}$ of $n$ elements, such that
$$
\inf_{\mathcal{Q} \in \mathcal{M}(\mathcal{S})} G\left(\tau^{n,m},\mathcal{Q}\right) \ge \sup_{\tau \in \T} \inf_{\mathcal Q \in \mathcal M(\mathcal S)} G\left(\tau,\mathcal Q \right) - \delta=  \inf_{\mathcal Q \in \mathcal M(\mathcal S)}\sup_{\tau \in \T} G\left(\tau,\mathcal Q\right) - \delta.
$$
\end{remark}
}

\section{Selling a stock with ambiguous drift}
\label{sec:example1}
In this section, we consider the problem of  optimally selling a stock, when its expected return rate is not known by the investor \blue{and can take values in a compact set of scenarios. We consider both the cases of an ambiguity neutral and an ambiguity averse investor in order to investigate how ambiguity aversion impacts the valuation of this option.} We assume that the stock price follows the Black-Scholes model with unobservable drift:
$$
\frac{dS_t}{S_t} = b^\theta dt + \sigma dW_t,
$$
where $\theta$ is the scenario variable. Letting $X_t = \log S_t$, we have:
$$
dX_t = \mu^\theta dt + \sigma dW_t,\quad \mu^\theta = b^\theta -\frac{\sigma^2}{2}.
$$

As in Section \ref{main}, we denote by $\mathcal P$ the reference probability measure on $\mathcal S$,  by $\mathbb P^{\mathcal P}$ the associated probability measure on $\mathcal B(S)\times \mathcal F$ defined according to
\eqref{eq:PQ} \blue{and by $\Theta$ the scenario variable under $\mathbb P^{\mathcal P}$}. 

The following proposition (see e.g., \cite{bismuth2019portfolio}) provides the value of $\beta_t:=\mathbb E^{\mathbb P^{\mathcal P}}[\mu^{\Theta}|\mathcal F_t]$ and characterizes the dynamics of the price process under the posterior probability. 
\begin{proposition}\label{drift.uncert.prop}
Assume that the prior distribution $m_\mu$ of $\mu^{\Theta}$ is sub-Gaussian, that is, there exists $\eta>0$ with
\begin{equation}\label{eq:subgaussian}
\int e^{\eta z^2} m_\mu (dz)<\infty.
\end{equation}
Then, the following holds true:
\begin{itemize}
\item For all $t\geq 0$,
$$
\beta_t =  \Gamma_m(t,X_t),
$$
where 
\begin{equation}\label{eq:generalGm}
\Gamma_m(t,x) = \frac{\int m_\mu(dz) z\exp\left(-\frac{t}{2}\frac{z^2}{\sigma^2} + \frac{z(x-X_0 )}{\sigma^2}\right)}{\int m_\mu(dz) \exp\left(-\frac{t}{2}\frac{z^2}{\sigma^2} + \frac{z(x-X_0 )}{\sigma^2}\right)}.
\end{equation}
\item The process $\widehat W$ defined by 
$$
\widehat W_t = W_t + \int_0^t \sigma^{-1}(\mu^\Theta - \beta_t) ds
$$
is a standard Brownian motion adapted to $\blue{\mathbb{F}}$. 
\item The signal follows the Markovian dynamics
\begin{equation}\label{eq:signalunderposterior}
dX_t =  \Gamma_m(t,X_t) dt +\sigma d\widehat W_t.
\end{equation}
\end{itemize}
\end{proposition}
{ For example, if the distribution $m_\mu$ is supported by two points $\mu_1 = \mu$ and $\mu_2 = -\mu_1$, then $\Gamma_m$ does not depend on $t$ and is given by the simple formula (where $p = m(\mu)$).
$$
\Gamma_m(t,x) = \mu\frac{p \exp\left(\frac{2\mu(x-X_0 )}{\sigma^2}\right)-(1-p)}{p \exp\left(\frac{2\mu(x-X_0 )}{\sigma^2}\right)+(1-p) }
$$}

\blue{Note that condition \eqref{eq:subgaussian} is trivially satisfied by any distribution with compact support, as in our case. In this model, for fixed $X_0 = \log S_0$, there is a one-to-one correspondence between the log stock price value $X_t = \log S_t$ and the best estimate of the drift of the process. The drift in \eqref{eq:signalunderposterior} exhibits time- and path-dependent dynamics through $\Gamma_m(t, X_t)$, which encodes the agent's posterior belief over the unknown scenario parameter. The process $\widehat W$ is called the \emph{innovation process} in filtering theory, and captures the remaining randomness after filtering out the learned component.} 

\blue{Figure \ref{traj.fig} illustrates the learning process in the model. The left graph plots two stock price trajectories with two different volatility values ($\sigma=0.1$ for the low volatility and $\sigma=0.3$ for the high volatility) and drift values, selected randomly among the three possible scenarios. The true drift values are shown on the graph. The right graph shows the trajectories of the posterior estimate of the drift $\mathbb E^{\mathbb P^\mathcal P}[b^\Theta|\mathcal F_t]$, with the three possible values shown as dotted lines. When the volatility and thus the uncertainty is low, the drift quickly converges to the true value, while for higher volatility the oscillations around the true value are more significant.}

\begin{figure}
\centerline{\includegraphics[width=1.2\textwidth]{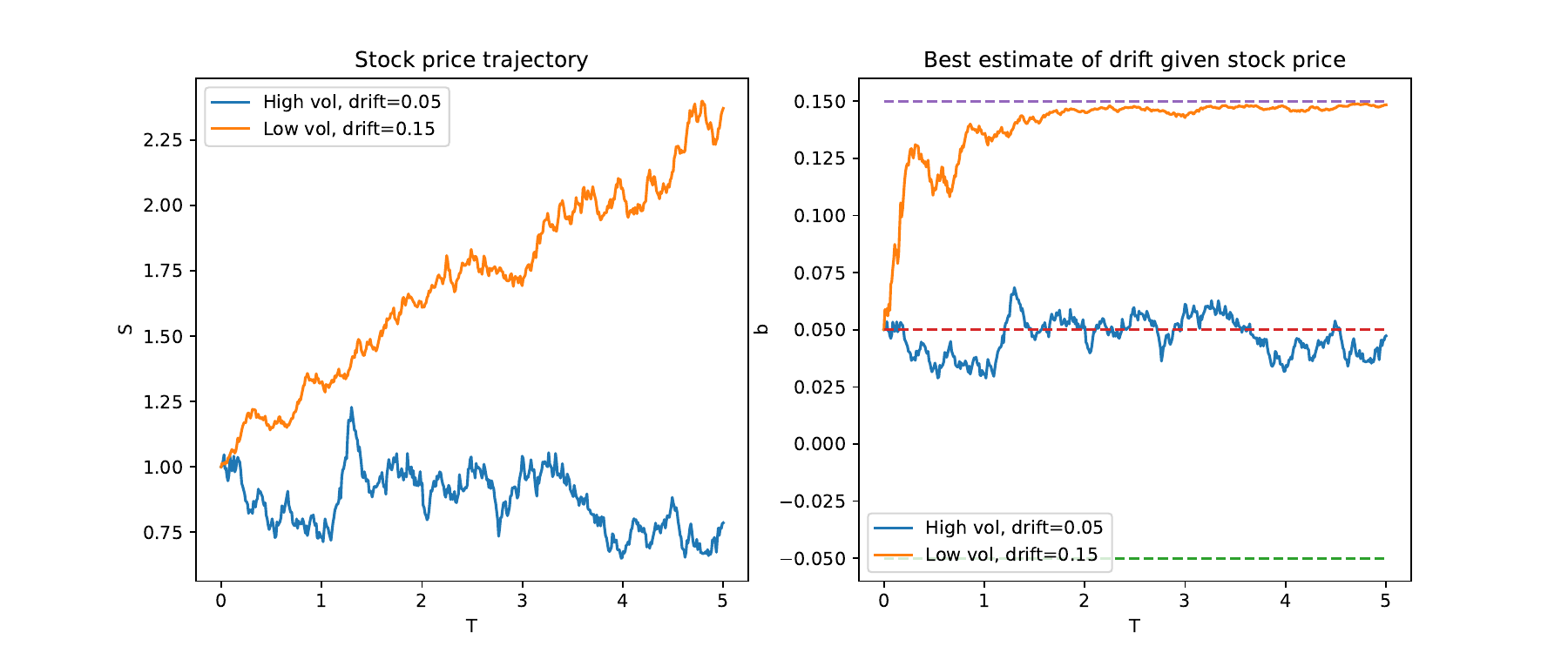}}
\caption{Learning process in the model. The left graph illustrates two stock price trajectories with two different volatility values ($\sigma=0.1$ for the low volatility and $\sigma=0.3$ for the high volatility) and drift values, selected randomly among the three possible scenarios, and shown on the graph. The right graph shows the trajectories of the posterior estimate of the drift.  }
\label{traj.fig}
\end{figure}

\paragraph{Ambiguity-neutral investor}In the absence of ambiguity aversion, the investor wishes to maximize the discounted expected value of the stock:
\begin{align}
\blue{\sup_{\tau \in \mathcal T}} \mathbb E^{\mathbb P^{\mathcal P}}[e^{-r \tau} S_\tau] = \sup_{\tau \in \mathcal T} \mathbb E^{\mathbb P^{\mathcal P}}[ e^{-r \tau + X_\tau}],\label{value.eq}
\end{align}
\blue{where $\mathcal T$ is the set of stopping times with values in $[0,T]$.}
Note that by modifying the drift distribution, we can and will assume without loss of generality that $r=0$. 

Introduce the value function
\begin{align}
v(t,x) = \blue{\sup_{t\leq \tau \leq T}}\mathbb E^{\mathbb P^{\mathcal P}}\left[e^{X_\tau}\Big|X_t = x\right].\label{vf.eq}
\end{align}
By standard arguments \citep{peskir2006optimal}, it satisfies the variational inequality
\begin{align}
\max\left\{\frac{\partial v}{\partial t} + \frac{\sigma^2}{2} \frac{\partial^2 v}{\partial x^2}  +  \Gamma(t,x)\frac{\partial v}{\partial x} ,{e^x-v}\right\}= 0,\label{vareq}
\end{align}
with the terminal condition $v(T,x) = e^x$,
{and the optimal stopping time exists and is given by 
$$
\tau^* = \inf\{t\in [0,T]: v(t,X_t)=e^{X_t}\}. 
$$
The following proposition characterizes this stopping time in terms of an exercise boundary. Remark that when $\mu_1+\sigma^2/2 \geq 0$, it is easy to see that the optimal stopping time is $\tau^* = T$ and when $\mu_2+\sigma^2/2 \leq 0$, then $\tau^* = 0$, so that the proposition considers the nontrivial case. Its proof is given in Appendix \ref{boundary.proof}.
\begin{proposition}\label{boundary.prop}
Let the prior distribution $m_\mu$ have bounded support with endpoints $\mu_1$ and $\mu_2$. Assume that $\mu_1 + \sigma^2/2< 0<\mu_2 + \sigma^2/2$.  
Then there exists a bounded continuous map $b:[0,T]\to \mathbb R$ such that 
$$
\tau^* = \inf\{t\in[0,T]: X_t \leq b(t)\}. 
$$
\end{proposition}}


{
In the first numerical experiment, we analyze the convergence of the optimal stopping boundary and the value function for a discrete set of stopping times to the continuous-time solution described above as the number of possible stopping dates increases. To solve the variational inequality \eqref{vareq}, we use the standard implicit discretization scheme with the optimal stopping rule applied at every step in the continuous-stopping case, or only at the allowed stopping dates. 

Figure \ref{noambig.fig} shows the convergence of the value \eqref{value.eq} and of the optimal stopping boundary in a model without model ambiguity, as the number of possible stopping dates increases. For comparison, in the bottom panel, we also show the value obtained when the scenario uncertainty is immediately resolved.\footnote{In this case, the optimal stopping rule becomes trivial: if the drift value is greater than the interest rate, it is optimal to stop at the terminal date, otherwise immediate stopping is optimal. In particular, this value no longer depends on the volatility $\sigma$. } 
The following parameter values were used: $T = 5$ years, $r = 2\%$,  $b =[-5\%,5\%,15\%]$, $\mathcal P[\theta] =1/3$ for any $\theta = 1,2,3$. 
We used 512 time steps and 1600 space steps in the discretization scheme. The figures show the ``continuous-stopping" approximation (with stopping rule applied at every step) as well as the results with 128, 32, 8 and 2 stopping times (e.g., in the case of 2 stopping times, the stopping rule is applied only at $0$ and at $T/2$). We observe good convergence to the continuous-stopping case: the difference between $N_S = 512$ and $N_S = 128$ stopping dates is barely visible at the level of the optimal stopping boundary and completely invisible at the level of the value function. 

Stronger volatility increases the scenario uncertainty, which has a negative impact on the option price. 

\begin{figure}
\centerline{\includegraphics[width=0.8\textwidth]{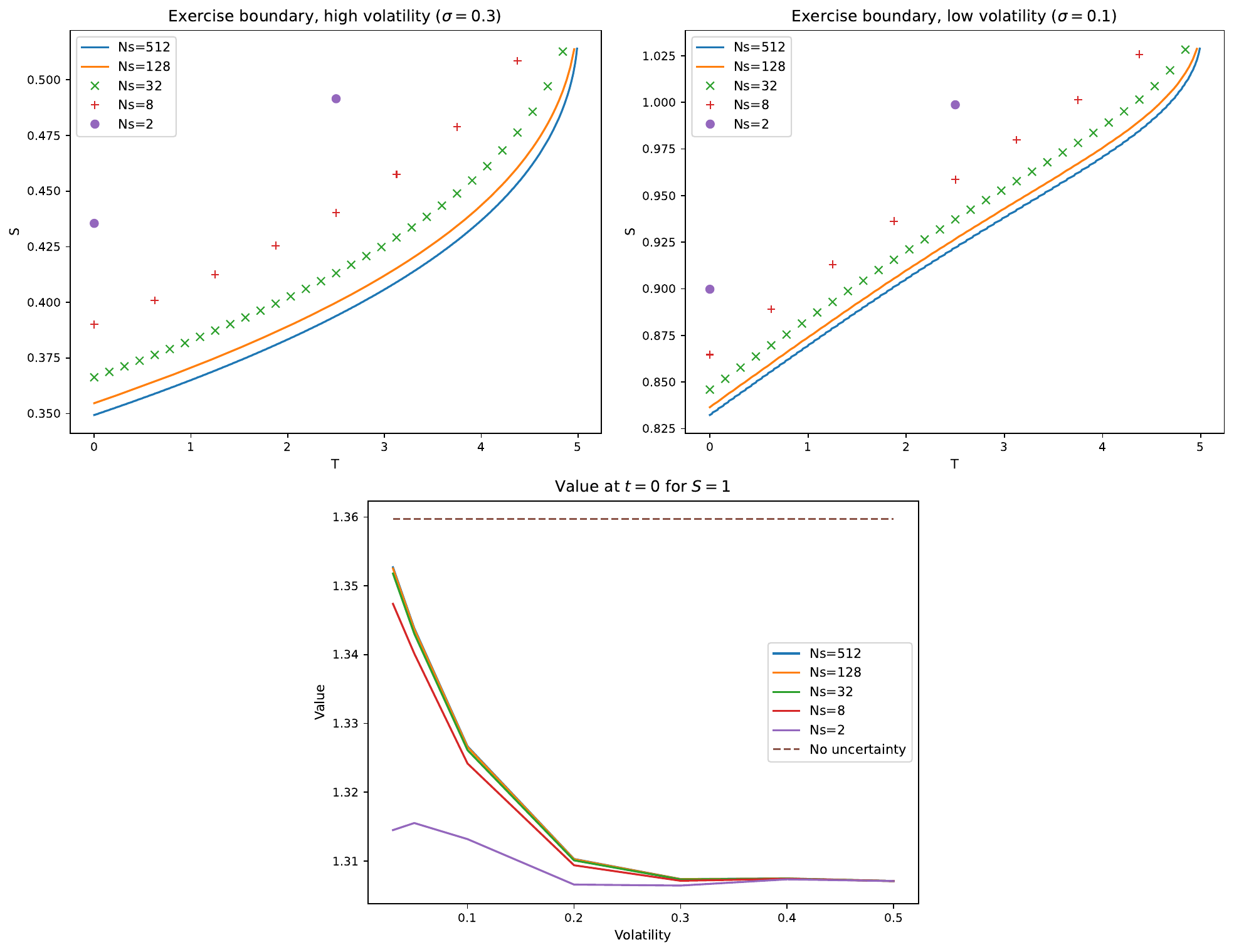}}
\caption{Convergence under no ambiguity for different numbers $N_s$ of possible stopping dates. The optimal exercise boundaries in the top panels are only shown at the dates where the stopping rule is applied.}
\label{noambig.fig}
\end{figure}

\paragraph{Ambiguity-averse investor} We now consider the ambiguity averse case, and focus an agent wishing to solve the optimization problem
$$
\sup_{\tau \in \T} v^{-1}\left( \int_{\mathcal S} d\mathcal P(\theta)v\left(\mathbb E^{\theta}\left[e^{-r\tau}S_\tau\right]\right)\right),
$$
with ambiguity function $v$ given by the power function \eqref{eq:powerlambdapositive}-\eqref{eq:powerlambdanegative}. We aim to sovlve this problem in continuous time and for a potentially infinite and uncountable set of scenarios.

By \eqref{eq:Gpower} and Theorem \ref{thm:minmax}, the problem can be rewritten as
\begin{align}
\blue{\inf_{\mathcal Q}} \,\mathbb E^{\mathcal P}\left[\left(\frac{d\mathcal Q }{d\mathcal P}\right)^{\frac{\lambda}{\lambda-1}}\right]^{{\frac{1-\lambda}{\lambda}}}\blue{\sup_{\tau \in \mathcal T}} \, \mathbb E^{\mathbb P^{\mathcal Q}}\left[e^{-r\tau }S_{\tau}\right],\label{optimsup.eq}
\end{align}
which is convenient because the second term can be determined as above for any probability measure $\mathbb P^{\mathcal Q}$. To solve this problem numerically, we discretize the inner problem using 512 time steps and 1600 space steps in the discretization scheme.}


{

In the following numerical experiment, we examine the convergence of the value function \eqref{optimsup.eq} and the optimal stopping boundary at the optimal probability measure as the number of discretization steps increases. The number of scenarios is  kept fixed is equal to three. 
Figure \ref{ambig_boundary.fig} plots the optimal exercise boundaries computed under the scenarios probabilities $(\mathcal Q[1], \mathcal Q[2], \mathcal Q[3])$ which achieve the minimum in \eqref{optimsup.eq} for different values of the ambiguity aversion parameter $\lambda$, as well as the probabilities $(\mathcal Q[1], \mathcal Q[2], \mathcal Q[3])$ at optimum as function of $\lambda$. The other parameters have the same values as in the no-ambiguity case. 

We can see that in the scenarios corresponding to higher ambiguity aversion it is optimal to exercise earlier, i.e., for higher values of the underlying, as the agents affects higher probability to adverse scenarios (low drift) and lower probability to favorable scenarios (high drift). In the presence of model ambiguity aversion, therefore, the investors are less likely to hold stocks, and require higher expected returns to hold them. 


\begin{figure}
\centerline{\includegraphics[width=1.2\textwidth]{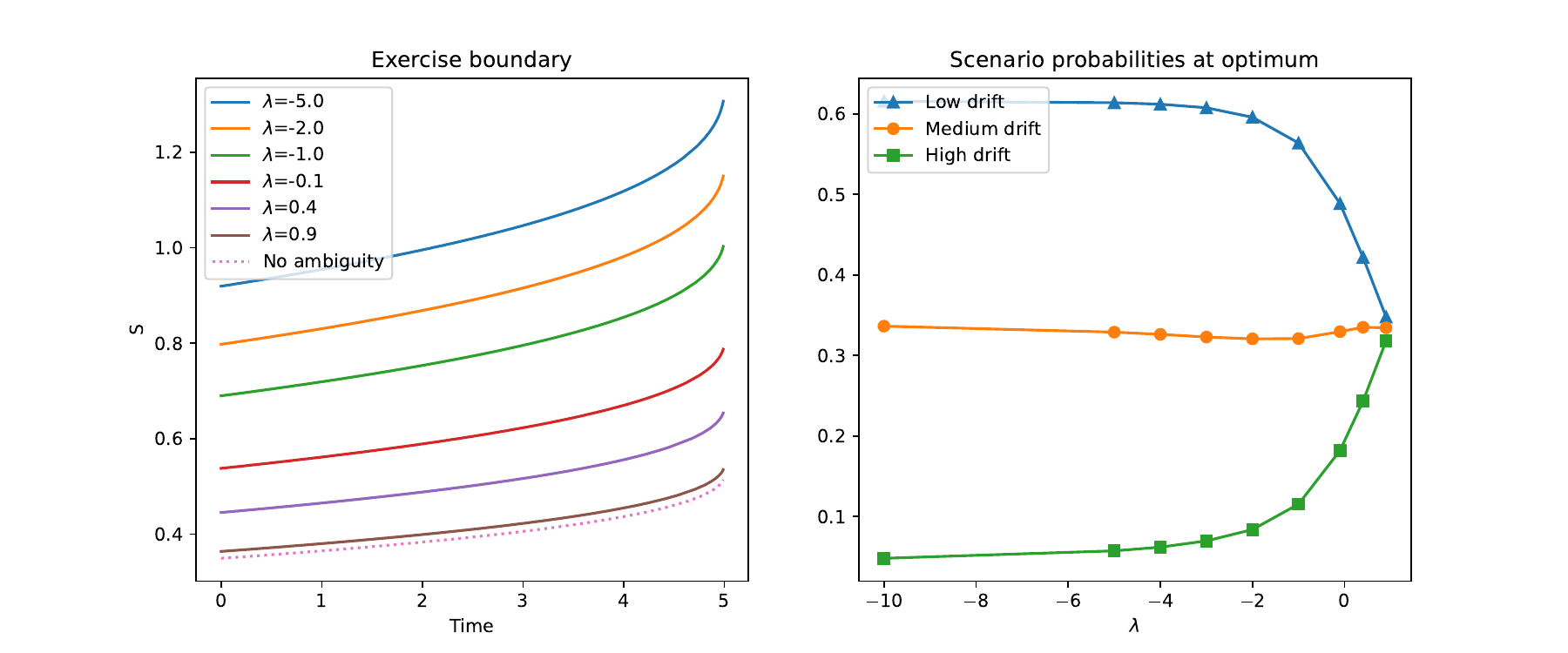}}
\caption{Exercise boundaries of \eqref{optimsup.eq}  (left) and the scenario probabilities $\lambda$ (right) as function of the model ambiguity parameter.
}
\label{ambig_boundary.fig}
\end{figure}


\begin{figure}
\centerline{\includegraphics[width=1.0\textwidth]{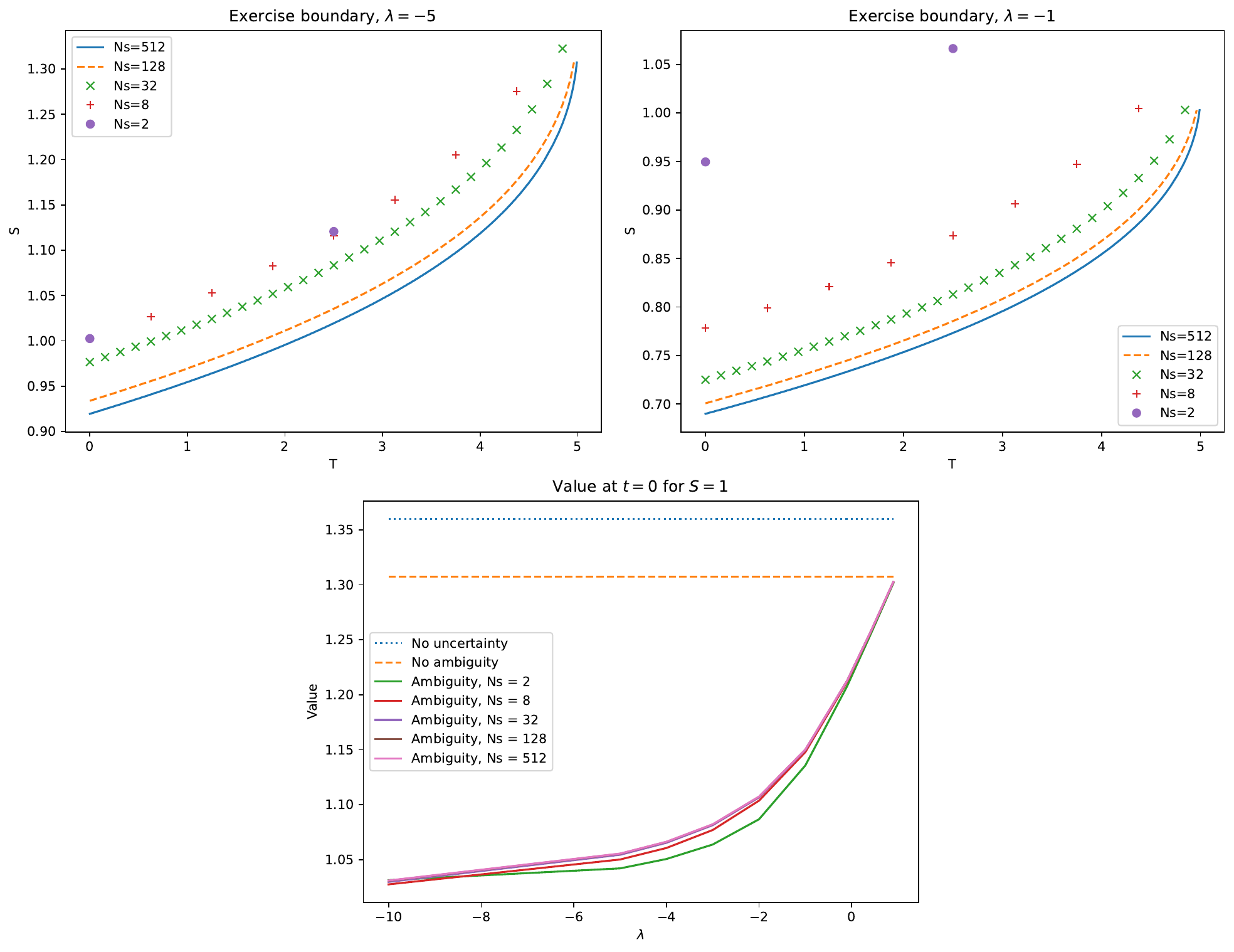}}
\caption{Convergence under ambiguity for increasing number $N_s$ of stopping dates.}
\label{ambig_convergence_times.fig}
\end{figure}

Figure \ref{ambig_convergence_times.fig} illustrates the convergence of the value and of the optimal exercise boundary in the presence of model ambiguity as the number of possible stopping times increases and for different values of the ambiguity aversion parameter. It is clear that the value decreases for strong ambiguity aversion (low $\lambda$). As in the no ambiguity case, we observe good convergence of both value and exercise boundary, although now the boundary convergence appears slower than in the no ambiguity case. In view of Theorem \ref{thm:existencesolution}, this suggests that the optimal solution to the discrete stopping time problem may be used to approximate the solution to the continuous-time problem.

In our final numerical experiment, we aim to illustrate the applicability of Theorems \ref{thm:infscenarios} and \ref{thm:minmax} beyond the case of a finite number of scenarios. We  assume that the unobservable drift $b^\theta$ can take any value in a given interval $[b_{\min}, b_{\max}]$, equipped with the uniform reference measure. This corresponds to an uncountable, one-dimensional set of scenarios. 

For numerical computation, we take $[b_{\min}, b_{\max}] = [-5\%, 15\%]$ and discretize it by $N_b$ equally spaced points, each with equal prior $1/N_b$. We then solve the corresponding discrete version of the problem for increasing values of $N_b$. This allows us to assess the convergence of the numerical solution as the discretization of the scenario space becomes finer, in analogy with the convergence tests performed earlier with respect to the discretization of the time grid.
We take same parameters as above. We again use 512 time steps and 1600 space steps in the discretization scheme and assume that each discretized time is a possible stopping date.

Figure \ref{convergence_ambiguity_scenarios.fig} shows the convergence of the value and of the optimal exercise boundary at optimal probability measure as the discretization of the drift interval becomes finer.
In particular, we take uniform grids containing $1$, $2$, $4$, $8$, $16$, $32$, $64$ and $128$ interior nodes (plus the two endpoints). We observe that the boundaries rapidly stabilize as $N_b$ increases and the grid gets refined. We also note that the exercise boundary is higher for coarser grids: we conjecture that this can be due to faster learning of the true scenario when fewer alternatives are present.

 For what regards the value, for strong ambiguity aversion the  curves are almost indistinguishable, whereas for higher $\lambda$, the learning is faster and the value is slightly higher when fewer drift scenarios are considered. 
 We observe again good convergence as the number of drift scenarios increases: the curves corresponding to $66$ and $130$ total grid points are barely distinguishable, indicating that the discretized formulation provides an accurate approximation of the continuous-scenario case.

Overall, our results indicate that the discrete formulation with finitely many stopping dates and scenarios, for which Theorem \ref{thm:existencesolution} ensures the existence of a solution, provides an accurate representation of the continuous-time, continuous-scenario problem.
}



\begin{figure}
\centerline{\includegraphics[width=1\textwidth]{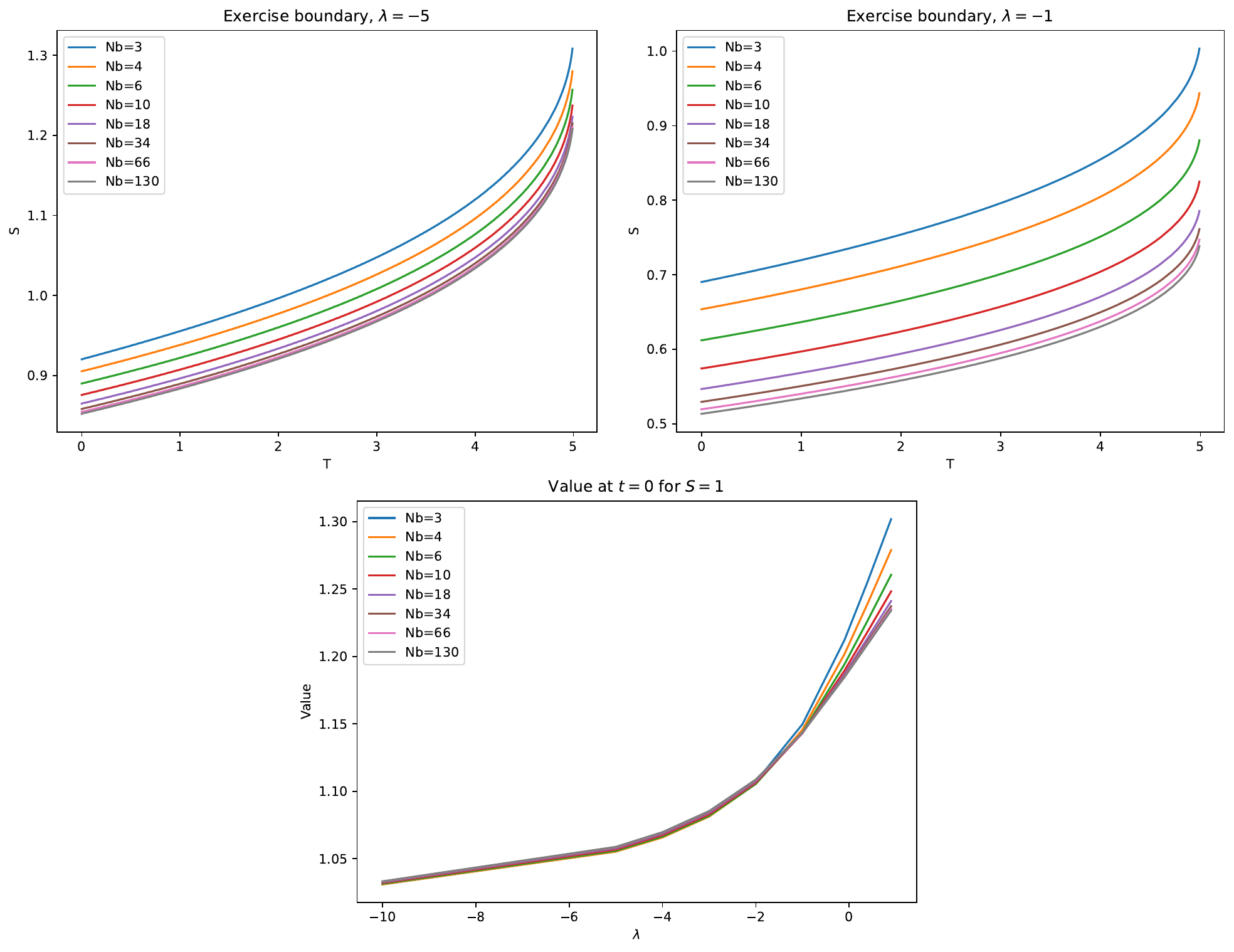}}
\caption{Convergence under ambiguity for increasing number $N_b$ of discretization points in the drift interval. }
\label{convergence_ambiguity_scenarios.fig}
\end{figure}

\section{Divestment policy under \blue{model} ambiguity \blue{on scenarios}}
\label{sec:example2}
In this section we revisit the results of \cite{flora2023green} in the presence of \blue{model ambiguity on the set of transition scenarios}. We consider the problem of an investor owning a potentially stranded asset, such as a coal-fired power plant. The future revenues of the asset  are determined by future values of risk factors (fuel prices, electricity price, carbon price) whose evolution is stochastic and whose distribution depends on the realized transition scenario. We assume there are $N$ scenarios corresponding to different climate, economic and policy assumptions, that is, $\mathcal S = \{1,\dots,N\}$. The true scenario is not known to the agent, and even the probability of occurrence of a given scenario is uncertain, that is, the agent faces both scenario uncertainty and model ambiguity. For each prior measure, the agent extracts noisy information about the scenario from the observations of a signal, and progressively updates her posterior probability of realization of each scenario based on this information. For example, if the emissions decrease at a steady rate, the agent will assume that an orderly transition scenario is more likely than a delayed transition scenario. 
In these conditions, our ambiguity-averse investor   aims to optimally choose the closure date of the project. 

Following \cite{flora2023green}, we consider this problem in a discrete-time setting.  We assume that, under scenario $i$, the \blue{$m$}-dimensional risk factor process follows
$$
X_t = \widetilde X_t + \mu^i_t,\qquad  \widetilde X_t =  \Phi \widetilde X_{t-1} + \Sigma \varepsilon_t,
$$
where  $\Sigma$ is a \blue{$m\times m$} constant volatility matrix, assumed to be non-singular, $\mu^i$ is a deterministic process, taken from integrated assessment model scenarios, $\Phi$ is a \blue{$m\times m$} matrix and $(\varepsilon_t)$ is an i.i.d.~sequence of \blue{$m$}-dimensional standard normal random vectors. 

The agent deduces scenario information from observations of a noisy signal $(S_t)_{t\geq 0}$, which is, for simplicity, assumed to be given by a scenario-dependent mean perturbed by Gaussian noise:
$$
S_t = \mu^{S,i}_t + \sigma^S \eta_t,
$$
where $\sigma^S>0$ and $\eta_t$ is an i.i.d.~sequence of standard normal random variables. 

In the absence of \blue{model ambiguity}, the investor aims to solve the following problem:
\begin{align}
\max_{\tau \in \mathcal T_T} \mathbb E^{\mathbb P^{\mathcal P}}\left[\sum_{t=1}^\tau \beta^t g(X_t) - \beta^\tau K(\tau)\right].\label{ftproblem}
\end{align}
Here $\beta$ is the discount factor, $g$ is the revenue function of the plant, $K$ is the cost of dismantling the plant, and the optimization is performed over the set $\mathcal T_T$ of all stopping times with values in $\{0,1,\dots,T\}$ in the observation filtration $\mathcal F_t = \sigma(S_s, \widetilde X_s, s=0,1,\dots,t)$. Note that we assume that the agent observes the stochastic part of the risk factor process $\widetilde X_t$ rather than the full risk factor process $X_t$ to ensure that the scenario information is extracted only from the signal. 

Denote by $\Theta$ the scenario variable and by $\boldsymbol \pi_t:=(\pi^i_t)_{i=1}^N$ the vector of posterior scenario probabilities defined by
$$
\pi^i_t = \mathbb P^{\mathcal P}[\Theta=i|\mathcal F_t],\quad i=1,\dots,N. 
$$
It can be shown (Proposition 1 in \cite{flora2023green}) that the process $(\boldsymbol\pi_t)_{t\geq 0}$ is a Markov process in the observation filtration, independent from $(\widetilde X_t)_{t\geq 0}$, and whose dynamics can be defined by
\begin{align*}
 {\pi}_t^i &= \frac{\pi^i_{t-1} e^{-\frac{\left( S_t-\mu^{i}_{S,t} \right)^2}{2 \sigma_S^2}}}{\sum_{j=1}^N \pi^j_{t-1} e^{-\frac{\left( S_t-\mu^{j}_{S,t} \right)^2}{2 \sigma_S^2}}},\quad i=1,\dots,N, 
\\
\Theta_t &= \min\{i=1,\dots,N: \sum_{j=1}^i \pi^j_{t-1} \geq U_t\},\\
S_t &= \sigma_S \eta_t + \sum_{i=1}^N \mathbf 1_{\Theta_{t} = i}\, \mu^i_{S,t},
\end{align*}
where $(U_t)_{t=0,1,\dots}$ is a sequence of independent random variables with uniform distribution on the interval $[0,1]$, independent from the sequences $(\varepsilon_t)$ and $(\eta_t)$. 
The problem \eqref{ftproblem} can therefore be solved by introducing the value function
\begin{align*}
    V(t,\widetilde{\mathbf X},  {\boldsymbol\pi}) &:= \max_{\tau\in \mathcal T_{t,T}}\mathbb E^{\mathbb P^{\mathcal P}}\left[\sum_{s=t+1}^\tau \beta^{s-t} g(\mathbf X_s) - \beta^{\tau-t} K(\tau)\, \Big|\,(\widetilde{\mathbf X}_t, {\boldsymbol \pi}_t) = (\widetilde{\mathbf X},  {\boldsymbol\pi})\right]
\end{align*}
where $\mathcal T_{t,T}$ is the set of stopping times  in the filtration of the process $(\widetilde{\mathbf X},\boldsymbol \pi)$ with values in $\{t,t+1,\dots,T\}$, and using the least squares Monte Carlo method as done in \cite{flora2023green}. 

In this paper, we are interested in the impact of \blue{model ambiguity on scenarios}. We thus consider the following optimization problem
$$
\max_{\tau \in \T_T}  \left(\sum_{\theta=1}^N \mathcal P[\theta] \left(\mathbb E^{\theta} \left[\sum_{t=1}^\tau \beta^t g(X_t) - \beta^\tau K(\tau)\right]\right)^\lambda \right)^\frac{1}{\lambda}
$$
Assuming that for every scenario $\theta\in \{1,\dots,N\}$ it holds
\begin{equation*}
\mathbb{E}^{\theta}\left[\sup_{0 \le t \le T} \left|\sum_{t=1}^\tau \beta^t g(X_t) - \beta^\tau K(\tau)\right|\right]<\infty,
\end{equation*}
we can apply Theorem \ref{thm:existencesolution} and conclude that 
\begin{multline}
\max_{\tau \in \T_T}  \left(\sum_{\theta=1}^N \mathcal P[\theta] \left(\mathbb E^{\theta} \left[\sum_{t=1}^\tau \beta^t g(X_t) - \beta^\tau K(\tau)\right]\right)^\lambda \right)^\frac{1}{\lambda} \\= \min_{\mathcal Q} \left(\mathbb E^{\mathcal P}\left[\left(\frac{d\mathcal Q}{d\mathcal P}\right)^{\frac{\lambda}{\lambda-1}} \right]\right)^{\frac{1-\lambda}{\lambda}}\max_{\tau \in \mathcal T_T}  \left(\mathbb E^{\mathbb P^{\mathcal Q}}\left[\sum_{t=1}^\tau \beta^t g(X_t) - \beta^\tau K(\tau)\right]\right)^+.\label{FTQ}
\end{multline}


For the numerical application, we consider the problem of determining the optimal closure time for a coal-fired power plant. For easy comparison, we use the same parameter values as \cite{flora2023green}. In particular, Release 3 NGFS scenarios are used, although Release 4 scenarios are already available at the time of writing. The cost of dismantling the plant, $K(\tau)$, was taken to be negative and equal to $30\%$ of the capital cost, meaning that the agent can sell the plant at any time, recovering 30\% of the initial investment. This is mainly needed becase with zero or positive dismantling cost, the maximum value of the plant is negative under the divergent transition scenario (where the carbon price is very high) and our method is not applicable. 

In the numerical simulations described below, we solve the optimization problem \eqref{FTQ}. The inner optimization (over stopping times) is performed using least squares Monte Carlo as explained in \cite{flora2023green}. One evaluation of the expectation takes about 40 seconds on a 2020 M1 MacBookPro. The outer optimization is done using the scipy implementation of COBYLA algorithm and converges, depending on the value of $\lambda$, in around 100 iterations. 

Figure \ref{uncert_ft.fig} (left graph) plots the project value as function of the uncertainty parameter $\lambda$. Project value under scenario uncertainty and model ambiguity is compared to the value without dynamic uncertainty (where all scenarios have equal probability but the agent knows the true scenario and can make the optimal stopping decision) and to the value with dynamic uncertanty but without ambiguity aversion. We see that while the impact of uncertainty is the main one, ambiguity aversion can reduce the project value by a further 10--20\% depending on the value of $\lambda$. The right graph of this figure plots the probabilities of selected scenarios as function of the undertainty parameter $\lambda$. For high ambiguity aversion (negative values of $\lambda$) the agent attributes a high probability to the worst-case scenario (here, the divergent net zero, which corresponds to the highest carbon price), and small probabilities with simular values to the other scenarios. As $\lambda$ increases, the probability of divergent net zero decreases, and the probability of other scenarios increases. 

\begin{figure}
\centerline{\includegraphics[width=\textwidth]{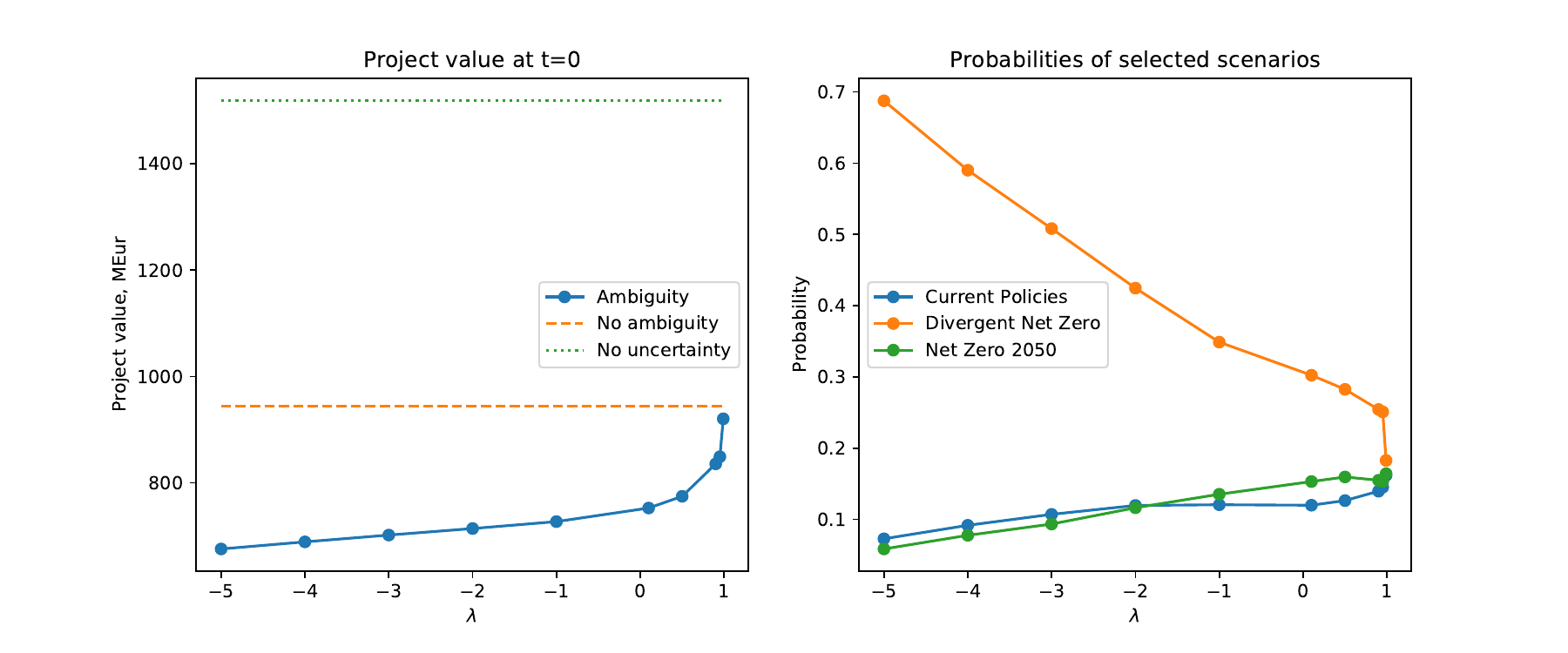}}
\caption{Project value and scenario probabilities as function of the model ambiguity parameter $\lambda$.}
\label{uncert_ft.fig}
\end{figure}

To further illustrate the impact of model ambiguity on the agent's behavior, Figure \ref{taufreq.fig} shows the distribution of plant closure times under different assumptions on scenario uncertainty and \blue{model} ambiguity. In each graph, the height of the bar corresponds to the proportion of trajectories where the exit decision is taken in a given year, and the color / pattern corresponds to the true scenario. For instance, in the top graph, when the scenario is immediately revealed to the agent and there is no ambiguity we see that, for example, when the true scenario is ``Current Policies", the plant always operates for its maximum lifetime ($T=30$), and when the true scenario is ``Divergent net zero" or ``Net Zero 2050", the plant is closed immediately. The small uncertainty in the case of NDCs scenario is due to other stochastic factors. The middle graph corresponds to the situation when the scenario is uncertain but there is no model ambiguity. In that case, the distribution of stopping times for each scenario is more spread out, for example, if the true scenario is ``Divergent net zero", the plant is closed between 2nd and 8th year, whereas it would have been optimal to stop immediately. This is because one needs to wait to acquire information about the scenario before taking the stopping decision. Finally, the bottom graph corresponds to the situation when the prior probabilities of scenarios are ambiguous and the agent is ambiguity averse. In this case, the agent favors early stopping, because a high subjective probability is attributed to the worst case scenario. 

{
For policymakers, these results show that ambiguity aversion further depresses the valuation of potentially stranded assets relative to a no-ambiguity benchmark, which is itself lower than in the full-information case in which the transition path is known. Under ambiguity, stranding occurs earlier and with greater losses, implying that models which ignore ambiguity understate both timing and severity. Ambiguity aversion may therefore accelerate exit from carbon-intensive assets, but it can also amplify transition risks and weaken financial stability.

These results speak directly to the design and communication of transition-scenario databases issued by international bodies (e.g., NGFS). While such databases enhance risk assessment, they also shape beliefs about scenario uncertainty: depending on whether they narrow or widen perceived ambiguity, they can mitigate or exacerbate risk. Careful specification and transparent communication of uncertainty—including ranges, model dispersion, and confidence—thus become policy-relevant levers.

} 

\begin{figure}
\centerline{\includegraphics[width=\textwidth]{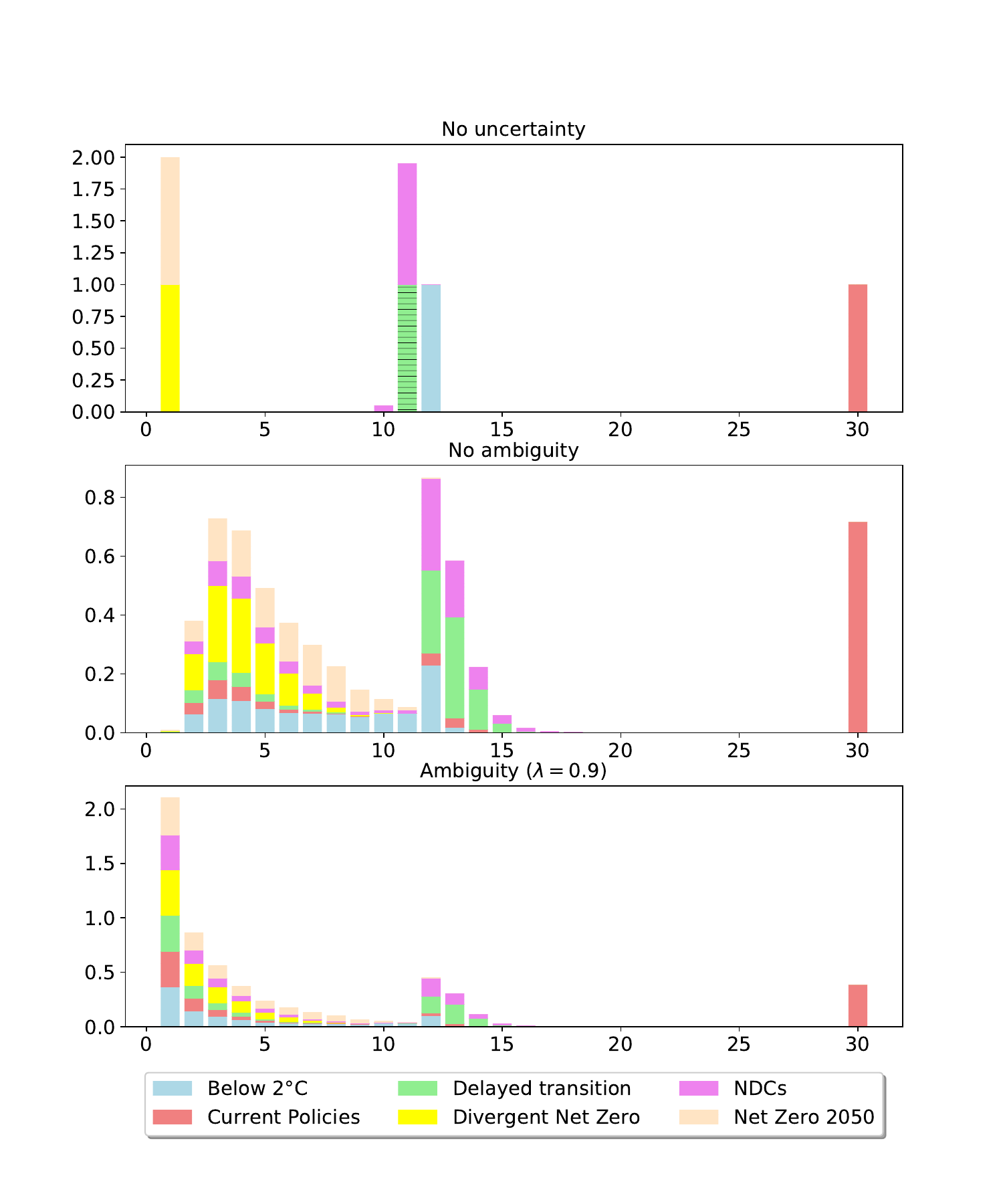}}
\caption{Distribution of plant closure times in the absence of scenario uncertainty (top graph), in the presence of uncertainty but in absence of ambiguity (middle graph) and in the presence of ambiguity (bottom graph). The color/pattern corresponds to the true scenario. }
\label{taufreq.fig}
\end{figure}



\appendix
\section{Appendix}
Throughout this appendix, we assume that the function $v$, which defines the ambiguity attitude of the investor, satisfies the assumptions of Proposition \ref{prop:drapeau}, so that there exists a function $R$ with the properties listed in this proposition. 

\subsection{Proof of Proposition \ref{boundary.prop}}
\label{boundary.proof}
{\begin{lemma}\label{gamma.lm}
Let the prior distribution $m_\mu$ have bounded support and let $\mu_1$ and $\mu_2$ be, respectively, the left and right endpoints of the support. Then the coefficient $\Gamma_m$ is increasing, differentiable in $t$ and $x$ with bounded derivatives, and satisfies
\begin{align*}
|\Gamma_m(t,x)|&\leq \max\{|\mu_1|,|\mu_2|\},\quad (t,x)\in [0,T]\times \mathbb R. \\
\lim_{x\to +\infty}\Gamma_m(t,x)& = \mu_2\\
\lim_{x\to -\infty}\Gamma_m(t,x)& = \mu_1,\\
|\partial_t \Gamma_m(t,x)|\leq C \partial_x \Gamma_m(t,x) 
\end{align*}
for some constant $C<\infty$,
where the convergence is uniform in $t$. 
\end{lemma}}
\begin{proof}{
From \eqref{eq:generalGm}, the derivative of $\Gamma_m$ is 
\begin{align*}
\partial_x\Gamma_m (t,x)&= \frac{1}{\sigma^2}\frac{ \int z^2 m_\mu(dz) \mathcal E(t,x,z) \int m_\mu(dz) \mathcal E(t,x,z) - \left(\int z m_\mu(dz) \mathcal E(t,x,z)\right)^2}{ \left(\int m_\mu(dz) \mathcal E(t,x,z)\right)^2}\\
& = \frac{1}{\sigma^2} \left\{\int z^2 \bar m_\mu^{t,x}(dz) -\Big(\int z \bar m_\mu^{t,x}(dz)\Big)^2\right\}
\end{align*}
with
$$
\mathcal E(t,x,z) = \exp\left(-\frac{t z^2}{2\sigma^2} + \frac{z(x-X_0 )}{\sigma^2}\right)\quad \text{and}\quad 
\bar m_\mu^{t,x}(A):= \frac{\int_A m_\mu(dz) \mathcal E(t,x,z) }{\int_{\mathbb R} m_\mu(dz) \mathcal E(t,x,z)},
$$
$\quad A\in \mathcal B(\mathbb R)$.
It is then clear that 
$$|\Gamma_m(t,x)|\leq \frac{\max\{|\mu_1|,|\mu_2|\}}{\sigma^2}\quad \text{and}\quad |\partial_x\Gamma_m(t,x)|\leq 2\frac{\max\{|\mu_1|,|\mu_2|\}^2}{\sigma^2}.
$$ 
Moreover, by Cauchy-Schwarz inequality, $\partial_x\Gamma(t,x)\geq 0$ which means that $\Gamma$ is increasing in $x$. 

Further, the derivative of $\Gamma_m$ with respect to $t$ is given by
$$
\partial_t \Gamma_m(t,x) =\frac{1}{2\sigma^2}\left\{-\int z^3 \bar m_\mu^{t,x}(dz)+\int z^2 \bar m_\mu^{t,x}(dz)\int z \bar m_\mu^{t,x}(dz)\right\},
$$
from which the remaining properties follow easily. }
\end{proof}

\begin{proof}[Proof of Proposition \ref{boundary.prop}] {In this proof, we will simply write $\mathbb E$ for the expectation $\mathbb E^{\mathbb P^{\mathcal P}}$. 
\paragraph{Existence}Let $h>0$. Then, for $s\geq t$, 
$$
X^{t,x+h}_s - X^{t,x}_s = h + \int_t^s (\Gamma(u,X^{t,x+h}_u) - \Gamma(u,X^{t,x}_u)) du,
$$
so that 
\begin{align}
h\leq X^{t,x+h}_s - X^{t,x}_s \leq h e^{K(s-t)},\label{boundx}
\end{align}
with $K = 2\frac{\max\{|\mu_1|,|\mu_2|\}^2}{\sigma^2}$.

Recall that
$$
v(t,x) = \sup_{t\leq \tau \leq T} \mathbb E [e^{X^{t,x}_\tau }]
$$
Then,
\begin{align*}
v(t,x+h) - e^{x+h}  - (v(t,x)-e^x) & =  \sup_{t\leq \tau \leq T} \mathbb E [e^{X^{t,x+h}_\tau }] - e^{x+h} - \sup_{t\leq \tau \leq T} \mathbb E [e^{X^{t,x}_\tau }] + e^x\\
&\geq (e^h-1)(v(t,x) - e^x)\geq 0,
\end{align*}
which shows that $v(t,x)-e^x$ is increasing in $x$. Moreover,
\begin{align*}
|v(t,x+h)  - v(t,x)|&\leq (e^{he^{K(T-t)}}-1) v(t,x)\\
v(t,x) &\leq \sup_{t\leq \tau \leq T} \mathbb E [e^{x+\sigma \widehat W_{\tau-t}+ (T-t){\max\{|\mu_1|,|\mu_2|\}} }],
\end{align*}
which shows that $v(t,x)$ is 
continuous in $x$. 
This implies that for each $t\in [0,T)$, the set $\{x:v(t,x)=e^x\}$ is closed and we can define $b(t) = \max\{x:v(t,x)=e^x\}\in [-\infty,\infty]$, which satisfies $\{v(t,x)=e^x\} = \{x\leq b(t)\}$. 

\paragraph{Upper bound}For $0\leq t < s<T$, 
$$
v(t,x)\geq \mathbb E[e^{X^{t,x}_s}] = e^x + e^x\mathbb E\left[\int_t^s \{\Gamma(u,X^{t,x}_u)+\sigma^2/2\}e^{X^{t,x}_u-x}du\right].
$$
By dominated convergence,
$$
\lim_{s\downarrow t}\frac{1}{s-t}\mathbb E\left[\int_t^s \{\Gamma(u,X^{t,x}_u)+\sigma^2/2\}e^{X^{t,x}_u-x}du\right] = \Gamma(t,x)+\frac{\sigma^2}{2}.
$$
Therefore, for each $t\in [0,T)$ the set $\{x:\Gamma(t,x)+\sigma^2/2 >0\}$ belongs to the continuation region. By Lemma \ref{gamma.lm} and the assumption of the proposition, the exercise region is bounded from above by a constant, so that $b(t)<C$ for $t\in[0,T)$ for some $C<\infty$. 


\paragraph{Upper semicontinuity}
To prove upper semicontinuity of the boundary, let us first show that $v(t,x)$ is continuous in $t$. Fix $t>0$ and $h>0$ with $t+h<T$. On the one hand, we have:
$$
v(t,x)\geq \sup_{t+h\leq \tau \leq T}\mathbb E[e^{X^{t,x}_\tau}] = \mathbb E[v(t+h,X^{t,x}_{t+h})] \geq v(t+h,x) -K\mathbb E[|X^{t,x}_{t+h}-x|],
$$
where $K$ is the Lipschitz constant of $v(t+h,\cdot)$. On the other hand, by standard arguments, $\mathbb E[|X^{t,x}_{t+h}-x|] = O(\sqrt{h})$. Further, 
\begin{align*}
v(t,x) &= e^{x} + \sup_{t\leq \tau \leq T} \mathbb E\left[\int_t^\tau e^{X^{t,x}_s}(\Gamma(s,X^{t,x}_s) + \sigma^2/2)ds\right]\\ &\leq e^x + \mathbb E\left[\int_t^{t+h} e^{X^{t,x}_s}\mu_2 ds\right] + \sup_{t+h\leq \tau \leq T} \mathbb E\left[\int_{t+h}^\tau e^{X^{t,x}_s}(\Gamma(s,X^{t,x}_s) + \sigma^2/2)ds\right]\\
& = \mathbb E[e^{x} - e^{X^{t,x}_{t+h}}] + \mathbb E\left[\int_t^{t+h} e^{X^{t,x}_s}\mu_2 ds\right] + \mathbb E[v(t+h,X^{t,x}_{t+h})].
\end{align*}
Using standard arguments for the first two terms, and the Lipschitz continuity of $v$ for the last one, we see that $v(t,x)\leq v(t+h,x) + O(\sqrt{h})$. Thus $v$ is $\frac{1}{2}$-Hölder continuous in $t$.

Assume that, for some $t\in [0,T)$ and $x\in \mathbb R$, $v(t,x)>e^x$. Then, by continuity of $v$, there is a neighborhood of $(t,x)$ such that for all $(t',x')$ belonging to this neighborhood, $v(t',x')>e^{x'}$. This implies that the continuation region is open, and therefore, the exercise boundary $b$ is upper semicontinuous. 
\paragraph{Lower semicontinuity}
Let $t<T$ and assume that $v(t,x) = e^x$. This means that 
$$
e^x\geq \sup_{t\leq \tau \leq T-h} \mathbb E[e^{X^{t,x}_\tau}]
$$
for any $h$ with $0\leq h\leq T-t$. 
Define the process $\widetilde X^{t,x}$ as the solution of the stochastic differential equation
$$
d\widetilde X^{t,x}_s = \Gamma(s+h,\widetilde X^{t,x}_s - C h) ds + \sigma dW_s,
$$
starting from $x$ at time $t$, where $C$ is the constant defined in Lemma \ref{gamma.lm}. By this lemma and the comparison theorem for SDEs, $\widetilde X^{t,x}_s \leq X^{t,x}_s$ for $s\geq t$. Therefore,
$$
e^x \geq \sup_{t\leq \tau \leq T-h} \mathbb E[e^{\widetilde X^{t,x}_\tau}] = \sup_{t+h\leq \tau \leq T} \mathbb E[e^{Ch+X^{t+h,x-Ch}_\tau}]
$$
so that $v(t+h,x-Ch) = e^{x-Ch}$ and the point $(t+h,x-Ch)$ belongs to the exercise region. 

Finally, let $t<T$, assume that $v(t,x')=e^{x'}$ and let $x<x'$. By the first part of this proof and the continuity of $\Gamma_m$, one can find $h>0$ and $\mu$ with $\mu + \frac{\sigma^2}{2}<0$ such that $\Gamma(s,z)<\mu$ for all $(s,z)\in [t-h,t]\times (-\infty, x]$. By Hölder continuity of $v$ in the first variable, for $(s,z)$ in this set, $v(s,z)\leq e^{z} + C\sqrt{h}$ for some constant $C>0$. 

Define the process $\overline X^{s,z}$ by $\overline X^{s,z}_r = \mu (r-s)+\sigma (W_r-W_s)$ and the stopping time $\tau^{s,z}_x:=\inf\{r>s: \overline X^{s,z}_r = x\}$. Then,
$$
v(s,z) \leq \bar v(s,z):=\sup_{\tau\geq s}\mathbb E[(e^x+C\sqrt{h})\mathbf 1_{\tau\geq \tau^{s,z}_x} + e^{{\bar{X}}^{s,z}_\tau}\mathbf 1_{\tau <\tau^{s,z}_x}].
$$
Note that here the stopping time is unbounded and therefore $\bar v(s,z)$ does not depend on $s$. This function is the solution of the variational inequality
$$
\max\left\{\frac{\sigma^2}{2} \frac{\partial^2 \bar v}{\partial z^2} + \mu \frac{\partial \bar v}{\partial z},e^z - \bar v\right\} = 0
$$
on $(-\infty,x)$, with boundary condition $\bar v(x) = e^x + C\sqrt{h}$. To solve it, we look for $x^*<x$ with 
\begin{align*}
&\frac{\sigma^2}{2} \frac{\partial^2 \bar v}{\partial z^2}(z) + \mu \frac{\partial \bar v}{\partial x}(z) = 0,\quad z>x^*, \\
&\bar v(x) = e^x + C\sqrt{h},\quad \bar v(x^*) = e^{x^*},\quad \bar v'(x^*) = e^{x^*}. 
\end{align*}
The solution of the differential equation writes
$$
\bar v(z) = C_1 + C_2 e^{-\frac{2\mu}{\sigma^2}z}.
$$
Substituting this into the boundary condition, we get:
$$
C_1 + C_2 e^{-\frac{2\mu}{\sigma^2}x} = e^x + C\sqrt{h},\quad C_1 + C_2 e^{-\frac{2\mu}{\sigma^2}x^*} = e^{x^*},\quad -\frac{2\mu C_2}{\sigma^2} e^{-\frac{2\mu}{\sigma^2}x^*}= e^{x^*},
$$
and solving for $x^*$, we find
$$
1-\frac{\sigma^2}{2|\mu|} + \frac{\sigma^2}{2|\mu|} e^{\frac{2|\mu|}{\sigma^2}(x-x^*)} = e^{x-x^*}(1+ C\sqrt{h}e^{-x})
$$
It is easy to see that this equation admits a solution {$x^*$} which satisfies {$x^*<x$} and converges to {$x$} as $h\to 0$. Since $x$ was arbitrarily close to $x'$, we conclude that $(s,x)$ belongs to the exercise region for any $x<x'$, provided that $s<t$ is sufficiently close to $t$. Together, the two parts of this step show that the exercise boundary is lower semicontinuous, and since it has already been shown to be upper semicontinuous, the proof of continuity is complete. }

\end{proof}

\subsection{Proof of Theorems \ref{thm:existencesolution} and \ref{thm:infscenarios}}\label{sec:proofexistence}

\blue{We proceed in two main steps. First, we prove Theorem \ref{thm:existencesolution} under the assumption of a finite scenario set. 
The key ingredient is the representation of randomized stopping times introduced in \eqref{eq:PTinfty} and the associated topological structure, both taken from \cite{belomestny2016optimal}. The chosen topology ensures the compactness of the space, which allows us to apply the classical minimax theorem of \cite{sion1958general} to establish Lemma \ref{saddle.lm} for randomized stopping times. Moreover, the finiteness of the set of scenarios enables the construction of a mapping from randomized to ordinary stopping times which preserves the expected payoff under every scenario, as shown in Lemma \ref{thin.lm}, which lead to the proof of Theorem \ref{thm:existencesolution}.
}

\blue{In the second step, we prove Theorem \ref{thm:infscenarios}, which extends Theorem \ref{thm:existencesolution} to the case of a compact, possibly uncountable scenario space. This is achieved by using Assumption \ref{cont.ass}, which provides the continuity properties required to pass to the limit in the number of scenarios.}


We start with some preliminary notation and two lemmas.

Let $\mathcal{P}_{\mathbb{T}}$ be the set of all $\left(A_t, t\in \mathbb{T}\right)$ satisfying $A_t \in \mathcal{F}_t$ for $t \in\mathbb{T}$  as well as $\mathbb{P}\left(A_t \cap A_s\right)=0$ for $t \ne s$, and $\mathbb{P}\left(\cup_{t\in \mathbb{T}} A_t\right)=1$. 
{There is a one-to-one correspondence between the set $\mathcal{P}_{\mathbb{T}}$ and the set of stopping times $\tau$ with values in $\mathbb{T}$ by taking $A_t = \{\tau=t\}$.}

Further, define
\begin{align}
\mathcal{P}_{\mathbb{T}}^{\infty}:=\Bigg\{&(f_t,t\in \mathbb{T}), \ f_t \in L^{\infty}\left(\Omega, \F_t, \mathbb{P} |_{\F_t}\right), \ f_t \ge 0 \text{ } \mathbb{P}-\text{a.s. } \forall t\in \mathbb{T}, \notag\\ &\sum_{t\in \mathbb{T}} f_t = 1 \text{ } \mathbb{P}-\text{a.s. } \Bigg\}.\label{eq:PTinfty}
\end{align}
There is a mapping of the set of randomized stopping times on $\mathbb{T}$ onto the set $\mathcal{P}_{\mathbb{T}}^{\infty}$. Recall that  a randomized stopping time with respect to a filtered probability space $\left(\Omega, \mathcal{F},\mathbb F:=\left(\mathcal{F}_t\right)_{t\in \mathbb T}, \mathbb{P}\right)$ is a mapping $\gamma: \Omega \times[0,1] \rightarrow\mathbb T$, nondecreasing and left-continuous in the second component and such that $\gamma(\cdot, u)$ is a $\mathbb F$-stopping time  for any $u \in[0,1]$. 
If $\gamma(\cdot,\cdot)$ is a randomized stopping time with values in $\mathbb T$, we take
$$
f_t(\omega)= \int_0^1 \mathbf 1_{\gamma(\omega,v) = t}\, dv,\quad \omega \in \mathcal F_t,\  t\in \mathbb T. 
$$

Let $\prod_{t\in \mathbb T} \sigma\left(L_t^{\infty}, L_t^1\right)$ be the product topology of $\sigma\left(L_t^{\infty}, L_t^1\right)$, $t\in \mathbb T$, on $\prod_{t\in \mathbb T} L^{\infty}\left(\Omega, \mathcal{F}_t,\left.\mathbb{P}\right|_{\mathcal{F}_t}\right)$, where $\sigma\left(L_t^{\infty}, L_t^1\right)$ denotes the weak* topology on \linebreak$L^{\infty}\left(\Omega, \mathcal{F}_t,\left.\mathbb{P}\right|_{\mathcal{F}_t}\right)$.
Then, from \blue{Banach-Alaoglu} theorem, $\mathcal{P}_{\mathbb T}^{\infty}$ is compact w.r.t. $\prod_{t\in \mathbb T} \sigma\left(L_t^{\infty}, L_t^1\right)$.  

\begin{lemma}\label{thin.lm}
For any $\left(f_t,t\in \mathbb T\right) \in \mathcal{P}_{\mathbb T}^{\infty}$, and any subset of $\mathcal S_N \subseteq \mathcal S$ containing $N$ elements, there exist $\left(A^N_t,t\in \mathbb T\right) \in \mathcal{P}_{\mathbb T}$ such that 
$$
\mathbb E^\theta [f_t Y_t] = \mathbb E^\theta[\mathbf{1}_{A^N_t}Y_t],\quad t\in \mathbb T, \quad \theta \in \mathcal S_N. 
$$
\end{lemma}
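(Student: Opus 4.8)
The plan is to reduce the claim to a finite-dimensional convexity/separation argument. Fix the $N$-point subset $\mathcal S_N = \{\theta_1,\dots,\theta_N\}$ and an element $(f_t, t\in\mathbb T)\in\mathcal P_{\mathbb T}^\infty$. For each $t\in\mathbb T$ consider the vector-valued measurable map $\Phi_t\colon\Omega\to\reals^N$ given componentwise by $(\Phi_t(\omega))_k = D^{\theta_k}(\omega)\,Y_t(\omega)$, where $D^{\theta_k}=d\mathbb P^{\theta_k}/d\mathbb P$ (finite by Assumption \ref{ass:equivalence}, and integrable against $\sup_t|Y_t|$ by Assumption \ref{ass:integrability}). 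Then $\mathbb E^{\theta_k}[f_t Y_t] = \mathbb E[f_t\,(\Phi_t)_k]$, so the target is to find a partition $(A^N_t)_{t\in\mathbb T}\in\mathcal P_{\mathbb T}$ with $\mathbb E[\mathbf 1_{A^N_t}\Phi_t] = \mathbb E[f_t\Phi_t]$ in $\reals^N$ for every $t$. First I would set up the natural ``pointwise randomization'' viewpoint: at each $\omega$, the family $(f_t(\omega))_{t\in\mathbb T}$ is a probability vector on the finite set $\mathbb T$; a partition corresponds instead to a choice of a single index $t(\omega)$, i.e.\ an extreme point (vertex) of the simplex. So what is needed is a measurable ``derandomization'' that preserves the $N r$ linear functionals $\omega\mapsto f_t(\omega)(\Phi_t(\omega))_k$.

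The cleanest route is via a Lyapunov-type (Dvoretzky–Wald–Wolfowitz / Blackwell) theorem on the convexity of the range of a vector measure, applied on an auxiliary nonatomic space. Concretely I would enlarge the probability space by an independent uniform variable $U$ on $[0,1]$ (this is exactly the randomization device already introduced before the lemma, where $f_t(\omega)=\int_0^1\mathbf 1_{\gamma(\omega,v)=t}\,dv$) to obtain a canonical randomized stopping time $\gamma$ with $\mathbb P(\gamma=t\mid\mathcal F_t)=f_t$. Then I would invoke the measurable selection/purification result: since the underlying extra randomness is nonatomic and only finitely many ($=Nr$) moment constraints must be matched, there is a nonrandomized $\mathbb F$-stopping time $\tau^N$ with $\mathbb E^{\theta_k}[\mathbf 1_{\tau^N=t}Y_t]=\mathbb E^{\theta_k}[f_tY_t]$ for all $t\in\mathbb T$, $k=1,\dots,N$. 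Setting $A^N_t=\{\tau^N=t\}$ gives the element of $\mathcal P_{\mathbb T}$ claimed. Alternatively, and perhaps more in the spirit of \cite{belomestny2016optimal}, one can argue backwards in time over $\mathbb T=\{t_1,\dots,t_r\}$: conditionally on $\mathcal F_{t_1}$ split $\{$not yet stopped$\}$ by a Lyapunov argument on $\mathcal F_{t_2}$-measurable sets to peel off the mass $f_{t_1}$ while matching the $N$ conditional integrals against $\Phi_{t_1}$, then recurse; measurability of the selected sets is handled by standard measurable selection theorems.

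The main obstacle is the measurability of the derandomization: pointwise, at each $\omega$ a probability vector can be replaced by a vertex of the simplex matching finitely many averages, but doing so $\omega$-by-$\omega$ need not yield an $\mathcal F_t$-adapted (indeed, even jointly measurable) family of sets, and one genuinely needs a global argument — the convexity of the range of the $\reals^{Nr}$-valued measure $B\mapsto(\mathbb E[\mathbf 1_B f_t\Phi_t])_{t,k}$ over adapted partitions, together with a measurable selection of the preimage. I would therefore spell out carefully that the relevant vector measure is nonatomic (using that $\mathbb P$ is nonatomic on the randomization component, or on $(\Omega,\mathcal F)$ itself after enlargement) so that Lyapunov's theorem applies and its range is the closed convex set containing both the ``fully randomized'' point $(\mathbb E[f_t\Phi_t])_{t,k}$ and enough structure to realize it by an adapted partition. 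The rest — finiteness of all the integrals, the identification $\mathbb E^{\theta_k}[\cdot\,Y_t]=\mathbb E[\cdot\,D^{\theta_k}Y_t]$, and the bookkeeping that $(A^N_t)$ indeed forms a partition — is routine given Assumptions \ref{ass:equivalence} and \ref{ass:integrability}.
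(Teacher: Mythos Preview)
Your overall strategy --- derandomize a randomized stopping rule by a Lyapunov/Dvoretzky--Wald--Wolfowitz argument, using that only finitely many ($Nr$) linear functionals need to be matched --- is the same mathematics that underlies the paper's proof. The paper, however, does not reprove this: it simply observes that for each $t\le s$ the finite family $\{\mathbb E^\theta[\mathbf 1_A Y_s\mid\mathcal F_t]:\theta\in\mathcal S_N\}$ is a \emph{thin} subset of $L^1(\Omega,\mathcal F_t,\mathbb P|_{\mathcal F_t})$ (any finite set is thin on a nonatomic space, by Kingman), and then invokes Proposition~C.3 of \cite{belomestny2016optimal}, which packages precisely the adapted backward-induction purification you sketch as your ``alternative'' route.

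Where your first route has a genuine gap is the passage from a generic purification theorem to an $\mathbb F$-stopping time. Standard DWW/Lyapunov gives you a nonrandomized rule measurable with respect to the \emph{full} $\sigma$-algebra, not an adapted one: nothing in the plain convexity-of-range argument forces $\{\tau^N=t\}\in\mathcal F_t$. Your suggestion to borrow nonatomicity from the auxiliary uniform $U$ makes this worse, not better: if the purification uses the $U$-coordinate, the resulting $\tau^N$ depends on $U\notin\mathcal F_T$ and is not an $\mathbb F$-stopping time at all. The correct fix is exactly your second route --- work backwards over $\mathbb T$, at each step applying a Lyapunov/thin-set argument \emph{within} $\mathcal F_t$ to carve off an $\mathcal F_t$-measurable piece matching the $N$ conditional integrals --- and this is precisely what Proposition~C.3 of \cite{belomestny2016optimal} does. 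So either cite that proposition (as the paper does), or develop the backward induction in full; the one-shot DWW invocation with auxiliary randomization does not close the argument.
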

\begin{proof}
Recall that for a fixed probability space $(\Omega,\mathcal F, \mathbb P)$ a subset $M\subseteq L^1(\Omega,\mathcal F, \mathbb P)$ is called thin if for any $A\in \mathcal F$ with $\mathbb P(A)>0$, there is some nonzero $g\in L^\infty(\Omega,\mathcal F, \mathbb P)$ vanishing outside $A$ and satisfying $\mathbb E[g\cdot Z] = 0$ for any $Z\in M$. 
The set
$$
\left \{\mathbb{E}^\theta\left[\mathbf{1}_{A}Y_{s}\bigg|\F_t\right], \theta\in \mathcal S_N \right \}
$$
is a thin subset  of  $ L^{1}\left(\Omega, \F_t, \mathbb{P} |_{\F_t}\right)$, for any $A \in  \F_T$ and any $t,s \in \mathbb T$ with $t \le s$ as a finite set on a non-atomic probability space, see \cite{kingman1968theorem}, p. 348. Together with Proposition C.3 of  \cite{belomestny2016optimal}, this implies the statement of the lemma. 
\end{proof}

Consider the mapping
$$
L: \mathcal M (\mathcal S) \times \mathcal P^\infty_{\mathbb T},\quad (\mathcal Q, (f_t,t\in \mathbb T))\mapsto R\left(\mathcal Q, \sum_{t\in \mathbb T} \int_{\mathcal S} d\mathcal Q(\theta) \mathbb E^\theta[f_t Y_t]\right)
$$
\begin{lemma}\label{saddle.lm}
Assumptions \ref{ass:equivalence}, \ref{ass:integrability} and \ref{continuity} be satisfied. Then there exists a saddle point $(\mathcal Q^*, (f^*_t,t\in \mathbb T))$ such that 
$$
L(\mathcal Q^*,(f_t,t\in \mathbb T)) \leq L(\mathcal Q^*,(f^*_t,t\in \mathbb T)) \leq L(\mathcal Q,(f^*_t,t\in \mathbb T))
$$
\end{lemma}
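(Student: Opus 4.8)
The idea is that Lemma~\ref{saddle.lm} is, after unwinding the notation, an instance of a minimax theorem for a function that is quasi-convex in one variable and quasi-concave in the other over compact convex sets. Write
$$
s\big(\mathcal Q,(f_t)\big):=\sum_{t\in\mathbb T}\int_{\mathcal S}d\mathcal Q(\theta)\,\mathbb E^\theta[f_t Y_t],
\qquad\text{so that}\qquad
L\big(\mathcal Q,(f_t)\big)=R\big(\mathcal Q,\,s(\mathcal Q,(f_t))\big).
$$
The crucial structural fact is that $s$ is affine in each of its two arguments separately: it is linear in $\mathcal Q$; and, choosing a randomized stopping time $\gamma$ whose image in $\mathcal P^\infty_{\mathbb T}$ is $(f_t)$, one has $s(\mathcal Q,(f_t))=\int_{\mathcal S}d\mathcal Q(\theta)\int_0^1\mathbb E^\theta[Y_{\gamma(\cdot,v)}]\,dv$, which is affine in $(f_t)$. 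In case (ii) of Assumption~\ref{continuity} this representation also gives $s(\mathcal Q,(f_t))>0$ for every $\mathcal Q$ and $(f_t)$, since each $\gamma(\cdot,v)$ is an ordinary stopping time with $\mathbb E^\theta[Y_{\gamma(\cdot,v)}]>0$.

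First I would record the topological facts: $\mathcal P^\infty_{\mathbb T}$ is convex and, as noted above, compact for $\prod_{t\in\mathbb T}\sigma(L^\infty_t,L^1_t)$; and $\mathcal M(\mathcal S)$ is convex and compact for the topology of weak convergence because $\mathcal S$ is a compact metric space. Fixing $\mathcal Q$, the map $(f_t)\mapsto\mathbb E^\theta[f_tY_t]=\mathbb E\big[f_t\,\mathbb E[D^\theta\mid\mathcal F_t]\,Y_t\big]$ is continuous in the weak$^*$ topology because $\mathbb E[D^\theta\mid\mathcal F_t]\,Y_t\in L^1(\mathcal F_t)$ with $L^1$-norm at most $\mathbb E^\theta[\sup_s|Y_s|]<C$ (Assumption~\ref{ass:integrability}); integrating in $\theta$ (dominated convergence, using $0\le f_t\le 1$) shows $(f_t)\mapsto s(\mathcal Q,(f_t))$ is affine and continuous, so $L(\mathcal Q,\cdot)$, a nondecreasing and (by Assumption~\ref{continuity}) continuous transform of it, is upper semicontinuous; its superlevel sets are preimages under an affine map of half-lines, hence convex, so $L(\mathcal Q,\cdot)$ is quasi-concave on $\mathcal P^\infty_{\mathbb T}$. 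Fixing $(f_t)$, the map $\mathcal Q\mapsto s(\mathcal Q,(f_t))=\int_{\mathcal S}d\mathcal Q(\theta)\sum_t\mathbb E^\theta[f_tY_t]$ is affine and weakly continuous, because $\theta\mapsto\sum_t\mathbb E^\theta[f_tY_t]$ is bounded (Assumption~\ref{ass:integrability}) and continuous — trivially so for finite $\mathcal S$, and otherwise by Assumption~\ref{cont.ass}. Consequently $\mathcal Q\mapsto(\mathcal Q,s(\mathcal Q,(f_t)))$ is affine, and joint quasi-convexity of $R$ (Proposition~\ref{prop:drapeau}(ii)) gives that $L(\cdot,(f_t))$ is quasi-convex on $\mathcal M(\mathcal S)$.

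The one remaining point is lower semicontinuity of $\mathcal Q\mapsto L(\mathcal Q,(f_t))$. Here I would use Proposition~\ref{prop:drapeau}(iv), which gives lower semicontinuity in $\mathcal Q$ of $R^+(\cdot,s)=\sup_{s'<s}R(\cdot,s')$, together with Assumption~\ref{continuity}, which forces $R(\mathcal Q,\cdot)=R^+(\mathcal Q,\cdot)$ (on $(0,\infty)$ in case (ii), where $s>0$). If $\mathcal Q_n\to\mathcal Q$ then $s_n:=s(\mathcal Q_n,(f_t))\to s:=s(\mathcal Q,(f_t))$; fixing $\varepsilon>0$, for $n$ large we have $s_n\ge s-\varepsilon$, hence
$$
\liminf_n L(\mathcal Q_n,(f_t))=\liminf_n R^+(\mathcal Q_n,s_n)\ \ge\ \liminf_n R^+(\mathcal Q_n,s-\varepsilon)\ \ge\ R^+(\mathcal Q,s-\varepsilon),
$$
and letting $\varepsilon\downarrow 0$ yields $\liminf_n L(\mathcal Q_n,(f_t))\ge R^+(\mathcal Q,s)=L(\mathcal Q,(f_t))$. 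Now a quasi-convex/quasi-concave minimax theorem (Sion's theorem, in the form used in the appendix of \cite{belomestny2016optimal}) applies: since $\mathcal M(\mathcal S)$ and $\mathcal P^\infty_{\mathbb T}$ are compact convex, $L$ is lower semicontinuous and quasi-convex in $\mathcal Q$ and upper semicontinuous and quasi-concave in $(f_t)$, so $\inf_{\mathcal Q}\sup_{(f_t)}L=\sup_{(f_t)}\inf_{\mathcal Q}L=:V$. The function $(f_t)\mapsto\inf_{\mathcal Q}L(\mathcal Q,(f_t))$ is an infimum of upper semicontinuous functions, hence upper semicontinuous on the compact set $\mathcal P^\infty_{\mathbb T}$, and attains the value $V$ at some $(f^*_t)$; symmetrically $\mathcal Q\mapsto\sup_{(f_t)}L(\mathcal Q,(f_t))$ is lower semicontinuous on $\mathcal M(\mathcal S)$ and attains $V$ at some $\mathcal Q^*$. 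Then
$$
L(\mathcal Q^*,(f^*_t))\ \le\ \sup_{(f_t)}L(\mathcal Q^*,(f_t))\ =\ V\ =\ \inf_{\mathcal Q}L(\mathcal Q,(f^*_t))\ \le\ L(\mathcal Q^*,(f^*_t)),
$$
so all quantities equal $V$, and the two saddle-point inequalities $L(\mathcal Q^*,(f_t))\le V=L(\mathcal Q^*,(f^*_t))\le L(\mathcal Q,(f^*_t))$ follow.

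I expect the delicate step to be the lower semicontinuity of $L$ in $\mathcal Q$: Proposition~\ref{prop:drapeau} only provides lower semicontinuity of the left-regularization $R^+$, so Assumption~\ref{continuity} is genuinely needed to pass from $R^+$ to $R$, and in case (ii) this is available only on $(0,\infty)$, which is exactly why the strict positivity of $s(\mathcal Q,(f_t))$ — read off from the randomized-stopping-time representation — has to be established. The appeal to a \emph{quasi}-convex minimax theorem, rather than the classical von Neumann/Sion statement, is a second point to be careful about, but it is costless here since $R$ is assumed only jointly quasi-convex to begin with.
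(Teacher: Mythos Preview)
Your proof is correct and follows essentially the same route as the paper's: both verify the hypotheses of Sion's minimax theorem (quasi-convexity and lower semicontinuity in $\mathcal Q$, quasi-concavity and upper semicontinuity in $(f_t)$, over compact convex sets) and then obtain the saddle point from compactness. Your treatment is in fact slightly more explicit in two places --- the identification $R=R^+$ under Assumption~\ref{continuity} that underlies the lower-semicontinuity step, and the justification of $s(\mathcal Q,(f_t))>0$ in case~(ii) via the randomized-stopping-time representation --- but the strategy is identical.
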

\blue{for any $\mathcal{Q} \in \mathcal{M}(\mathcal{S})$ and any $(f_t,t\in \mathbb T) \in P^\infty_{\mathbb T}$.}
\begin{proof}
We want to apply Theorem 3.4 of \cite{sion1958general} to $L$. The sets $\mathcal M (\mathcal S)$ and $\mathcal P^\infty_{\mathbb T}$ are clearly convex. 
Since $R$ is nondecreasing in the second argument, it is also quasi-concave with respect to it, so that $L(\mathcal Q,\cdot)$ is quasi-concave for any $\mathcal Q \in \mathcal M(\mathcal S)$. Point (ii) of Proposition \ref{prop:drapeau} implies quasi-convexity of $L(\cdot,(f_t,t\in \mathbb T))$ for any $(f_t,t\in \mathbb T)\in \mathcal P^\infty_{\mathbb T}$, and thus $L$ is quasi-concave-convex.

Moreover, the function  $L(\mathcal Q,\cdot)$ is upper semicontinuous by point (i) of Proposition \ref{prop:drapeau} and since 
$$
(f_t,t\in \mathbb T) \mapsto \sum_{t\in \mathbb T} \int_{\mathcal S} d\mathcal Q(\theta) \mathbb E^\theta[f_t Y_t]
$$
 is continuous for all $\mathcal Q \in \mathcal M(\mathcal S)$ by Assumption \ref{ass:integrability}. 
 
 It remains to show that $L(\cdot,(f_t,t\in \mathbb T))$ is lower semicontinuous for all $(f_t,t\in \mathbb T)\in \mathcal P^\infty_{\mathbb T}$. Assume first that we are in the context of Assumption \ref{continuity} (i). Then, for any sequence $(\mathcal Q^n)$ converging to a limit $\mathcal Q \in \mathcal M(\mathcal S)$ and for any index $k\in \mathbb N$, 
\begin{align*}
\liminf_n L(\mathcal Q^n,(f_t,t\in \mathbb T)) &= \lim_n \inf_{m\geq n} R\left(\mathcal Q^m,\sum_{t\in \mathbb T} \int_{\mathcal S} d\mathcal Q^m(\theta) \mathbb E^\theta[f_t Y_t]\right) \\ &\geq  \lim_n \inf_{m\geq n} R\left(\mathcal Q^m,\inf_{j\geq k}\sum_{t\in \mathbb T} \int_{\mathcal S} d\mathcal Q^j(\theta) \mathbb E^\theta[f_t Y_t]\right) \\
&\geq  R\left(\mathcal Q,\inf_{j\geq k}\sum_{t\in \mathbb T} \int_{\mathcal S} d\mathcal Q^j(\theta) \mathbb E^\theta[f_t Y_t]\right),
\end{align*}
where the first inequality holds because $R^+$ is nondecreasing in $s$.
Now, making $k$ tend to $+\infty$, by continuity we obtain
$$
\liminf_n L(\mathcal Q^n,(f_t,t\in \mathbb T)) \geq L(\mathcal Q,(f_t,t\in \mathbb T)).
$$
Under Assumption \ref{continuity} (ii) a similar argument can be used since $$\sum_{t\in \mathbb T}\int_{\mathcal S} d\mathcal Q(\theta) E^{\theta}[f_t Y_t]>0.$$

Since both the set of probabilities $\mathcal M (\mathcal S)$ and the set $\mathcal P^\infty_{\mathbb T}$ are compact and the function $L$ is u.s.c.-l.s.c., the infimum and the supremum in Sion's theorem  are attained and there exists a saddle point.
\end{proof}

\begin{proof}[Proof of Theorem \ref{thm:existencesolution}]
Since the set of scenarios is finite, by Lemma \ref{thin.lm}, we can find $(A^*_t,t\in \mathbb T)\in \mathcal P_{\mathbb T}$ such that 
$$
\mathbb E^\theta[f^*_t Y_t] = \blue{\mathbb E^\theta[\mathbf{1}_{A^*_t}Y_t]},\quad t\in \mathbb T,\ \theta = 1,\dots,N,
$$
so that 
$$
L(\mathcal Q,(f^*_t,t\in \mathbb T)) = L(\mathcal Q,(\mathbf 1_{A^*_t},t\in \mathbb T)),\quad \forall \mathcal Q \in \mathcal M(\mathcal S),
$$
which means that for all $(A_t,t\in \mathbb T)\in \mathcal P_{\mathbb T}$
$$
L(\mathcal Q^*,(\mathbf 1_{A_t},t\in\mathbb T)) \leq L(\mathcal Q^*,(\mathbf 1_{A^*_t},t\in \mathbb T)) \leq L(\mathcal Q,(\mathbf 1_{A^*_t},t\in \mathbb T)).
$$
\end{proof}
\begin{proof}[Proof of Theorem \ref{thm:infscenarios}]
We now would like to prove that
$$
\sup_{\mathcal P_{\mathbb T}} \inf_{\mathcal Q \in \mathcal M(\mathcal S)} L(\mathcal Q,(\mathbf 1_{A_t},t\in \mathbb T) ) =  \inf_{\mathcal Q \in \mathcal M(\mathcal S)} \sup_{\mathcal P_{\mathbb T}} L(\mathcal Q,(\mathbf 1_{A_t},t\in \mathbb T) )
$$
without assuming that the set of scenarios is finite. 
Clearly,
$$
\sup_{\mathcal P_{\mathbb T}} \inf_{\mathcal Q \in \mathcal M(\mathcal S)} L(\mathcal Q,(\mathbf 1_{A_t},t\in \mathbb T) ) \leq  \inf_{\mathcal Q \in \mathcal M(\mathcal S)} \sup_{\mathcal P_{\mathbb T}} L(\mathcal Q,(\mathbf1_{A_t},t\in \mathbb T) ),
$$
and it remains to prove the opposite inequality. By Lemma \ref{saddle.lm},
\begin{align*}
\sup_{ \mathcal P^\infty_{\mathbb T}} \inf_{\mathcal Q \in \mathcal M(\mathcal S)} L(\mathcal Q,(f_t,t\in \mathbb T) ) &\geq  \inf_{\mathcal Q \in \mathcal M(\mathcal S)} \sup_{\mathcal P^\infty_{\mathbb T}} L(\mathcal Q,(f_t,t\in \mathbb T) )\\
&\geq \inf_{\mathcal Q \in \mathcal M(\mathcal S)} \sup_{\mathcal P_{\mathbb T}} L(\mathcal Q,(\mathbf1_{A_t},t\in \mathbb T) ),
\end{align*}
and it remains to show that
$$
\sup_{\mathcal P_{\mathbb T}} \inf_{\mathcal Q \in \mathcal M(\mathcal S)} L(\mathcal Q,(\mathbf 1_{A_t},t\in \mathbb T) )  \geq \sup_{ \mathcal P^\infty_{\mathbb T}} \inf_{\mathcal Q \in \mathcal M(\mathcal S)} L(\mathcal Q,(f_t,t\in \mathbb T) ) ,
$$
or, in view of Lemma \ref{saddle.lm},
$$
\sup_{\mathcal P_{\mathbb T}} \inf_{\mathcal Q \in \mathcal M(\mathcal S)} L(\mathcal Q,(\mathbf 1_{A_t},t\in \mathbb T) )  \geq  \inf_{\mathcal Q \in \mathcal M(\mathcal S)} L(\mathcal Q,(f^*_t,t\in \mathbb T) ).
$$
Fix a sequence $\{\varepsilon_n\}_{n\geq 1}$ of positive numbers converging to zero. For each $n$, by Assumption \ref{cont.ass}, there exists $(A^n_t,t\in \mathbb T)\in \mathcal P_{\mathbb T}$ such that 
$$
\sum_{t\in \mathbb T} \int_{\mathcal S} d\mathcal Q(\theta) \mathbb E^{\theta} [Y_t \mathbf 1_{A^n_t}] + \varepsilon_n \geq \sum_{t\in \mathbb T}\int_{\mathcal S} d\mathcal Q(\theta) \mathbb E^{\theta} [Y_t f^*_t],\quad \forall \mathcal Q \in \mathcal M(\mathcal S).
$$
Indeed, let $\rho(h)$ be the functional in Assumption \ref{cont.ass}, and let $d_n$ be such that $2r\rho(d_n)\leq \varepsilon_n$ (where we recall that $r$ is the size of the time grid). 
As a compact metric space, $\mathcal S$ admits a finite cover by balls of radius $d_n$. Let $\mathcal S^n = \{\theta_k,k=1,\dots,N_n\}$ be the centers of such balls and let $(A^n_t,t\in \mathbb T)$ be given by Lemma \ref{thin.lm} with the set of scenarios $\mathcal S^n$. For any $\theta\in \mathcal S$, denote by $\psi_n(\theta)$ the element of $\mathcal S^n$ which is closest to $\theta$. 
Then, 
\begin{align*}
&\left|\sum_{t\in \mathbb T} \int_{\mathcal S} d\mathcal Q(\theta) \mathbb E^{\theta} [Y_t \mathbf 1_{A^n_t}] - \sum_{t\in \mathbb T}\int_{\mathcal S} d\mathcal Q(\theta) \mathbb E^{\theta} [Y_t f^*_t]\right|\\
& \leq \left|\sum_{t\in \mathbb T} \int_{\mathcal S} d\mathcal Q(\theta) \mathbb E [D^\theta Y_t (\mathbf 1_{A^n_t} - f^*_t)] \right|\\
& = \left|\sum_{t\in \mathbb T} \int_{\mathcal S} d\mathcal Q(\theta) \mathbb E [(D^\theta-D^{\psi_n(\theta)}) Y_t (\mathbf 1_{A^n_t} - f^*_t)] \right|\\
&\leq 2 \sum_{t\in \mathbb T} \int_{\mathcal S} d\mathcal Q(\theta) \mathbb E [|D^\theta-D^{\psi_n(\theta)}| Y_t ]\leq 2 r \rho(d_n)\leq \varepsilon_n.
\end{align*}

Finally, we can find a sequence $\{\mathcal Q_n\}$, which, by compactness, converges to $\overline{\mathcal Q}$, such that, 
\begin{align*}
&\liminf_n \inf_{\mathcal Q \in \mathcal M(\mathcal S)} R\left(\mathcal Q, \sum_{t\in \mathbb T} \int_{\mathcal S} d\mathcal Q(\theta) \mathbb E^{\theta} [Y_t \mathbf 1_{A^n_t}] \right) \\
& \geq  \liminf_n \inf_{\mathcal Q \in \mathcal M(\mathcal S)} R\left(\mathcal Q, \sum_{t\in \mathbb T} \int_{\mathcal S} d\mathcal Q(\theta) \mathbb E^{\theta} [Y_t f^*_t] - \varepsilon_n \right)\\
&\geq \liminf_n \left\{ R\left(\mathcal Q^n, \sum_{t\in \mathbb T} \int_{\mathcal S} d\mathcal Q^n(\theta) \mathbb E^{\theta} [Y_t f^*_t] - \varepsilon_n \right)-\varepsilon_n\right\}\\
& = \liminf_n  R\left(\mathcal Q^n, \sum_{t\in \mathbb T} \int_{\mathcal S} d\mathcal Q^n(\theta) \mathbb E^{\theta} [Y_t f^*_t] - \varepsilon_n \right).
\end{align*}
Because $R$ is l.s.c.~in the first argument and continuous and nondecreasing in the second argument, and because $\theta\mapsto \mathbb E^\theta[Y_t f_t]$ is continuous by Assumption \ref{cont.ass} 
\begin{align*}
&\liminf_n  R\left(\mathcal Q^n, \sum_{t\in \mathbb T} \int_{\mathcal S} d\mathcal Q^n(\theta) \mathbb E^{\theta} [Y_t f^*_t] - \varepsilon_n \right)\\
&\geq
R\left(\overline{\mathcal Q}, \liminf_n \sum_{t\in \mathbb T} \int_{\mathcal S} d\mathcal Q^n(\theta) \mathbb E^{\theta} [Y_t f^*_t] \right) \\&= R\left(\overline{\mathcal Q}, \sum_{t\in \mathbb T} \int_{\mathcal S} d\overline{\mathcal Q}(\theta) \mathbb E^{\theta} [Y_t f^*_t] \right)\geq \inf_{\mathcal Q \in \mathcal M(\mathcal S)} R\left({\mathcal Q}, \sum_{t\in \mathbb T} \int_{\mathcal S} d{\mathcal Q}(\theta) \mathbb E^{\theta} [Y_t f^*_t] \right)
\end{align*}
\end{proof}
\subsection{Proof of Theorem \ref{thm:minmax}}
\label{sec:proofminimax}
In this section we prove Theorem \ref{thm:minmax} by extending Theorem \ref{thm:infscenarios}. 
Clearly, 
$$
\sup_{\tau \in \T} \inf_{\mathcal Q \in \mathcal M(\mathcal S)} G\left(\tau,\mathcal Q\right)\leq  \inf_{\mathcal Q \in \mathcal M(\mathcal S)}\sup_{\tau \in \T} G\left(\tau,\mathcal Q\right),
$$
the difficulty is to prove the opposite inequality. Let $n$ be a positive integer, and denote $\mathbb T_n = \{kT/n,k=1,\dots,n\}$ and $\tau_n[\tau] = \min\{t\in \mathbb T_n, t\geq \tau\}$. Then, by Theorem \ref{thm:infscenarios},
\begin{align*}
\sup_{\tau \in \T} \inf_{\mathcal Q \in \mathcal M(\mathcal S)} G\left(\tau,\mathcal Q\right) &\geq \sup_{\tau \in \mathbb T_n} \inf_{\mathcal Q \in \mathcal M(\mathcal S)} G\left(\tau,\mathcal Q\right) =  \inf_{\mathcal Q \in \mathcal M(\mathcal S)} \sup_{\tau \in \mathbb T_n}G\left(\tau,\mathcal Q\right).
\end{align*}
Fix a sequence $\{\varepsilon_n\}_{n\geq 1}$ of positive numbers converging to zero. For each $n$, there exists $\mathcal Q^n \in \mathcal M (\mathcal S)$ such that 
$$
\inf_{\mathcal Q \in \mathcal M(\mathcal S)} \sup_{\tau \in \mathbb T_n}G\left(\tau,\mathcal Q\right) \geq \sup_{\tau \in \mathbb T_n}G\left(\tau,\mathcal Q^n\right)-\varepsilon_n.
$$
By compactness of $\mathcal M (\mathcal S)$, there exists $\overline{\mathcal Q}$ such that $\mathcal Q^n \to \overline{\mathcal Q}$. Then, since $R$ is nondecreasing in the second argument, and in view of Assumption \ref{continuity},
\begin{align}
\liminf_m\inf_{\mathcal Q \in \mathcal M(\mathcal S)} \sup_{\tau \in \mathbb T_m}G\left(\tau,\mathcal Q\right)& \geq \liminf_{m}\{ \sup_{\tau \in \mathbb T_m}R\left(\mathcal Q^m, \int_{\mathcal S} d \mathcal Q^m(\theta) \mathbb E^\theta[Y_{\tau}]\right) - \varepsilon_m\}\notag\\
& = \liminf_{m} R\left(\mathcal Q^m, \sup_{\tau \in \mathbb T_m}\int_{\mathcal S} d \mathcal Q^m(\theta) \mathbb E^\theta[Y_{\tau}]\right) \notag\\
&\geq R\left(\overline{\mathcal Q}, \liminf_{m} \sup_{\tau \in \mathbb T_m}\int_{\mathcal S} d \mathcal Q^m(\theta) \mathbb E^\theta[Y_{\tau}]\right) .\label{estimR}
\end{align}
Let us now fix a constant $\varepsilon>0$ and choose $\bar\tau \in \mathcal T$, such that 
$$
\sup_{\tau\in \mathcal T} \int_{\mathcal S} d\overline{\mathcal Q}(\theta) \mathbb E^\theta[Y_\tau]\leq \int_{\mathcal S} d\overline{\mathcal Q}(\theta) \mathbb E^\theta[Y_{\bar\tau}]+\varepsilon. 
$$
We can decompose: 
\begin{align*}
\sup_{\tau \in \mathbb T_m}\int_{\mathcal S} d \mathcal Q^m(\theta) \mathbb E^\theta[Y_{\tau}] &= \int_{\mathcal S} d \overline{\mathcal Q}(\theta) \mathbb E^\theta[Y_{\bar\tau}]\\
 & + \int_{\mathcal S} d {\mathcal Q}^m(\theta) \mathbb E^\theta[Y_{\bar\tau}] -\int_{\mathcal S} d \overline{\mathcal Q}(\theta) \mathbb E^\theta[Y_{\bar\tau}] \\
& + \sup_{\tau \in \mathbb T_m}\int_{\mathcal S} d \mathcal Q^m(\theta) \mathbb E^\theta[Y_{\tau}]-\int_{\mathcal S} d {\mathcal Q}^m(\theta) \mathbb E^\theta[Y_{\bar\tau}].
\end{align*}
By Assumption \ref{cont.ass}, $\theta\mapsto \mathbb E^{\theta}[Y_{\bar \tau}]$ is continuous, and therefore, for $m$ sufficiently large,
$$
\left|\int_{\mathcal S} d {\mathcal Q}^m(\theta) \mathbb E^\theta[Y_{\bar\tau}] -\int_{\mathcal S} d \overline{\mathcal Q}(\theta) \mathbb E^\theta[Y_{\bar\tau}] \right|\leq \varepsilon. 
$$
On the other hand, 
\begin{align*}
\sup_{\tau \in \mathbb T_m}\int_{\mathcal S} d \mathcal Q^m(\theta) \mathbb E^\theta[Y_{\tau}]-\int_{\mathcal S} d {\mathcal Q}^m(\theta) \mathbb E^\theta[Y_{\bar\tau}]\geq \int_{\mathcal S} d {\mathcal Q}^m(\theta) \mathbb E^\theta[Y_{\tau_m[\bar\tau]}-Y_{\bar\tau}],
\end{align*}
where the last term is $o(1)$ as $m\to \infty$ by Assumption \ref{holder.ass}.  Finally, for $m$ large enough
$$
\sup_{\tau \in \mathbb T_m}\int_{\mathcal S} d \mathcal Q^m(\theta) \mathbb E^\theta[Y_{\tau}] \geq \sup_{\tau \in \mathcal T}\int_{\mathcal S} d \overline{\mathcal Q}(\theta) \mathbb E^\theta[Y_{\tau}] -3\varepsilon,
$$
so that 
$$
\liminf_m \sup_{\tau \in \mathbb T_m}\int_{\mathcal S} d \mathcal Q^m(\theta) \mathbb E^\theta[Y_{\tau}] \geq \sup_{\tau \in \mathcal T}\int_{\mathcal S} d \overline{\mathcal Q}(\theta) \mathbb E^\theta[Y_{\tau}] -3\varepsilon,
$$
and since $\varepsilon$ was arbitrary, we also have
$$
\sup_{\tau \in \mathbb T_m}\int_{\mathcal S} d \mathcal Q^m(\theta) \mathbb E^\theta[Y_{\tau}] \geq \sup_{\tau \in \mathcal T}\int_{\mathcal S} d \overline{\mathcal Q}(\theta) \mathbb E^\theta[Y_{\tau}].
$$
Plugging this estimate into \eqref{estimR} yields
$$
\liminf_n\inf_{\mathcal Q \in \mathcal M(\mathcal S)} \sup_{\tau \in \mathbb T_n}G\left(\tau,\mathcal Q\right) \geq R\left(\overline{\mathcal Q}, \sup_{\tau \in \mathcal T}\int_{\mathcal S} d \overline{\mathcal Q}(\theta) \mathbb E^\theta[Y_{\tau}]\right) = \sup_{\tau \in \mathcal T} G(\tau,\overline {\mathcal Q}).
$$
Finally, this implies
$$
\liminf_n\inf_{\mathcal Q \in \mathcal M(\mathcal S)} \sup_{\tau \in \mathbb T_n}G\left(\tau,\mathcal Q\right) \geq\inf_{\mathcal Q \in \mathcal M(\mathcal S)}\sup_{\tau \in \mathcal T} G(\tau,\blue{\mathcal Q}),
$$
which finishes the proof. 

\bibliographystyle{chicago}
\bibliography{references}
\end{document}